\theoremstyle{plain}
\newtheorem{theorem}{Theorem}
\newtheorem*{theorem*}{Theorem}
\newtheorem{lemma}[theorem]{Lemma}
\newtheorem*{lemma*}{Lemma}
\newtheorem{corollary}[theorem]{Corollary}
\newtheorem{observation}[theorem]{Observation}
\newtheorem*{observation*}{Observation}
\newtheorem*{conjecture*}{Conjecture}
\theoremstyle{definition}
\newtheorem{definition}[theorem]{Definition}
\newcommand{\inners}{1.2pt}
\newcommand{\outers}{1pt}
\newcommand{\angled}[1]{\left\langle{#1}\right\rangle}
\newclass{\Hard}{hard}
\newclass{\Hness}{hardness}
\newcommand{\NPH}{\NP-\Hard}
\newclass{\Complete}{complete}
\newclass{\Cness}{completeness}
\newcommand{\NPc}{\NP-\Complete}
\newfunc{\dist}{dist}
\newfunc{\diam}{diam}
\newfunc{\border}{border}
\newfunc{\supp}{supp}
\newfunc{\espan}{span}
\newfunc{\YES}{YES}
\newfunc{\NOi}{NO}
\newcommand{\pname}[1]{\textsc{#1}}
\newcommand{\bigO}[1]{\mathcal{O}\!\left(#1\right)}
\title{Some results on Vertex Separator Reconfiguration}
\author{Guilherme C. M. Gomes$^{2}$, S\'{e}rgio H. Nogueira$^{1,3}$, Vinicius F. dos Santos$^{1,2}$}
\affil{\small \textsuperscript{1}Centro Federal de Educação Tecnológica de Minas Gerais - Belo Horizonte, Brazil\\
\textsuperscript{2}Departamento de Ci\^{e}ncia da Computa\c{c}\~{a}o, Universidade Federal de Minas Gerais - Belo Horizonte, Brazil\\
\textsuperscript{3}Instituto de Ci\^{e}ncias Exatas e Tecnol\'{o}gicas, Universidade Federal de Vi\c{c}osa - Florestal, Brazil}
\date{}
\begin{document}

\maketitle

\begin{abstract} 
    We present the first results on the complexity of the reconfiguration of vertex separators under the three most popular rules: token addition/removal, token jumping, and token sliding.
    We show that, aside from some trivially negative instances, the first two rules are equivalent to each other and that, even if only on a subclass of bipartite graphs, TJ is not equivalent to the other two unless $\NP = \PSPACE$; we do this by showing a 
relationship between separators and independent sets in this subclass of bipartite graphs.
    In terms of polynomial time algorithms, we show that every class with a polynomially bounded number of minimal vertex separators admits an efficient algorithm under token jumping, then turn our attention to two classes that do not meet this condition: $\{3P_1, diamond\}$-free and series-parallel graphs.
    For the first, we describe a novel characterization, which we use to show that reconfiguring vertex separators under token jumping is always possible and that, under token sliding, it can be done in polynomial time; for series-parallel graphs, we also prove that reconfiguration is always possible under TJ and exhibit a polynomial time algorithm to construct the reconfiguration sequence.
\end{abstract}


\section{Introduction}
Reconfiguration problems have recently emerged in different fields of computer science, such as satisfiability~\cite{Gopalan, cnf_reconf},
constraint satisfaction~\cite{constraint_sat1, constraint_sat2}, computational geometry~\cite{geometry1, geometry2}, and quantum complexity theory~\cite{quantum}, even though reconfiguration questions have been posed in mathematics for well over a century~\cite{15_puzzle}.
More practically, in real world problems, it is often the case that the systems we work with have a preferred state or one of optimal performance, but currently find themselves out of such a configuration.
It is natural to try to answer questions such as: \textit{can we bring our system to the desired state without breaking it?} or \textit{what is the minimum number of tweaks we have to perform to do so?}

Not surprisingly, reconfiguration has found its way into graph theory, with work on the reconfiguration of classical graph structures recently appearing in the literature.
While \pname{Independent Set Reconfiguration} is by far the most well studied graph theoretical reconfiguration problem, with a corpus of at least a dozen papers~\cite{Kaminski,sliding_tokens,isr_cograph1,isr_cograph2,isr_claw_free,sliding_pspace_completeness,param_tj,param_reconf,td_reconf,bandwidth_reconf,Lokshtanov,Ito1}, 
many others have also attracted the attention of the community, such as \pname{Clique Reconfiguration}~\cite{Ito1, Ito2} and \pname{Vertex Coloring Reconfiguration}~\cite{Cereceda,chordal_color_reconf,coloring_reconf_pspace,td_reconf}.
Many of these works formalize their problems through the \textit{reconfiguration framework}.
Under this framework, a set of tokens is placed on some vertices of the input graph and is said to be a \textit{state} if it satisfies the desired property, e.g. induces an independent set.
Two states are \textit{adjacent} if we can apply an atomic operation to one in order to obtain the other, e.g. add a token to a vertex.
Three operations have been given considerable attention in the literature:

\begin{itemize}
    \item \textbf{Token Sliding (TS)}: Move the token on a vertex $v$ to any neighbor of $v$. 
    \item \textbf{Token Jumping (TJ)}: Move one of the tokens to any other vertex of the graph.
    \item \textbf{Token Addition/Removal (TAR)}: Add a token to any vertex of the graph, or remove a token from any vertex.
\end{itemize}

Under both TJ and TS, the cardinality of the token set remains unchanged throughout the whole reconfiguration process.
This is clearly not the case with TAR and the rule as is allows for trivial solutions to reconfiguration in many cases.
For instance, on \pname{Independent Set Reconfiguration} under TAR, we could remove one vertex of the initial set at a time until we have an empty independent set, then add each vertex of the target set, completing the reconfiguration.
To keep the question interesting and avoid this triviality phenomenon, a lower or upper bound is imposed on the cardinality of the token set; in our example, we would add the restriction that the intermediate independent sets must have \textit{at least} $k$ vertices.
Since the direction of the bound is usually clear by the problem definition, the bounded version of TAR is usually referred to as $k$\textit{-TAR}.
Throughout our text, unless the bound of $k$-TAR is relevant to the discussion, in an abuse of terminology, we use simply TAR.

Given a graph $G$, a property $\pi$, an operation $P$, we define the \textit{reconfiguration graph} $R_G(\pi, P)$, which has as vertex set the states defined by $\pi$ and an edge between two states if they are adjacent under $P$.
Under this formalism, we say that $A$ can be reconfigured into $B$ if the vertices corresponding to $A$ and $B$ in $R_G(\pi, P)$ are in the same connected component.
When discussing the complexity of a reconfiguration problem, the natural certificate is the sequence of states used to reconfigure $A$ into $B$.
This sequence, however, is not necessarily of polynomial size.
Thus, to show that a reconfiguration problem is \NPc, we must either use another certificate that can be verified in polynomial time, or we must prove that \YES\ instances have a reconfiguration sequence of polynomial length.
For an example of the latter, we refer to the work of \citeauthor{Lokshtanov}~\cite{Lokshtanov} on \pname{Vertex Cover Reconfiguration}, more specifically their result on bipartite graphs under TAR.

In this work, we explore the reconfiguration problem of another fundamental structure of graph theory: \textit{vertex separators}; in this version of the problem, vertices $s,t$ we wish to separate are fixed, i.e. a set $S \subseteq V(G)$ satisfies the property of interest $\pi$ if and only if, in $G - S$, $s$ and $t$ are disconnected.
We refer to this problem as \pname{Vertex Separator Reconfiguration}.
We begin by showing that, aside from a set of easily identifiable negative instances.
In terms of hardness results, we show that, on a subclass of bipartite graphs, the problem is \PSPACE-\Complete\ under TS and \NPc\ under TAR/TJ.
On the positive side, we first show that, if a graph class  is tame, i.e. has a polynomially bounded number of minimal separators, under TAR/TJ, \pname{Vertex Separator Reconfiguration} can be solved in polynomial time.
Then, we turn our attention to classes that do not satisfy this condition.
In particular, we present a characterization for $\{3P_1,\text{diamond}\}$-free graphs, which we use to show that reconfiguration is always possible under TAR/TJ and easily verifiable under TS.
Our final results concern the class of $2$-connected series-parallel graphs, to which we provide a thorough analysis of the possible $st$-separators, which we use to show that \pname{Vertex Separator Reconfiguration} is also always positive under TAR/TJ, leaving the question under TS as future work.
Other questions we leave unanswered include the study of other non-tame classes, particularly the ones described by \citeauthor{Milanic}~\cite{Milanic}, and the search for reasonable sufficient conditions for the existence of polynomial time algorithms under TS.

\section{Preliminaries}

In this work we deal with finite simple graphs, and follow the terminology of standard textbooks in the area, such as~\cite{murty}. 
Given a connected graph $G$, we denote by $\diam(G)$ the length of the longest shortest path between any two vertices of $G$.
Given two non-adjacent vertices $s,t \in V(G)$, we say that $S \subseteq V(G) \setminus \{s,t\}$ is an $st$-separator if, on the subgraph $G - S$ induced by $V(G) \setminus S$, there is no path between $s$ and $t$; $S$ is minimal if no proper subset of $S$ is also an $st$-separator.
We say that a sequence $ \mathcal{S} = \langle S_1, \dots, S_n \rangle$ of subsets of $V(G)$ is a \textit{reconfiguration sequence} if, for all $i$, $S_i$ is an $st$-separator.
$\mathcal{S}$ is a \textit{reconfiguration sequence under $P$} if, for all $i \in [n-1]$, we can obtain $S_{i+1}$ by applying operation $P$ to $S_i$ once.
For example, if our operation of interest is token sliding, we must replace one of the vertices in $S_i$ by one of its neighbors.
Formally, for our three rules of interest, we say that two separators $S_i$, $S_j$ are \textit{adjacent}, denoted by $S_i \leftrightarrow S_j$, if the following holds:

\begin{itemize}
	\item[\textbf{TS}] $S_i \leftrightarrow S_j$ if  $| S_i | =| S_j |$, $ S_i  \setminus S_j =\{u_i\}$, $ S_j  \setminus S_i=\{v_j\}$ and $u_iv_j \in E(G)$.
	
	\item[\textbf{TJ}] $S_i \leftrightarrow S_j$ if  $| S_i | = | S_j |$ and $| S_i  \setminus S_j |= | S_j  \setminus S_i|=1$.
	
	\item[\textbf{$k$-TAR}]$S_i \leftrightarrow S_j$ if  $| S_i \Delta S_j |=|( S_i \setminus S_j) \cup ( S_j \setminus S_i)  |=1$ and $\max(|S_i|, |S_j|) \leq k$.
\end{itemize}

When describing our instances, we adopt the convention that $TS(G, S_a, S_b)$ means that we want to reconfigure $S_a$ into $S_b$ on graph $G$ under TS; for $k$-TAR, we use $TAR(G, S_a, S_b, k)$.
Following the notation previously used in the literature, if the instance is positive, i.e. $S_a$ can be reconfigured into $S_b$, we use the notation $S_a \leftrightsquigarrow S_b$.
\section{TAR-TJ equivalence}

During their study of \pname{Clique Reconfiguration}, \citeauthor{Ito2}~\cite{Ito2} proved that rules TAR, TJ
and TS are equivalent in terms of the complexity of the problem and that, in fact, the length of the reconfiguration sequences are within a factor of two from each other.
Inspired by their work, our first result states that \pname{Vertex Separator Reconfiguration} is equivalent under TAR and TJ, in the sense that, given a graph $G$ and two vertex separators  $S_a,S_b$ in $G$,  if we have a $TJ$-instance $(G,S_a,S_b)$  we can construct a $TAR-$instance such that $TJ(G,S_a,S_b)=TAR(S_a',S_b',k)$ and if we have  a $TAR$-instance we can construct a $TJ-$instance such that $TAR(G,S_a,S_b, K)=TJ(S_a',S_b')$.
Specifically, we show that every TJ instance has a corresponding TAR instance and, in the other direction, either the TAR instance is trivially negative or there is some TJ instance to which it is equivalent to.
In the following lemma, we show that, if $S_a, S_b$ are two minimal $uv$-separators such that $\vert S_a \vert = \vert S_b \vert$ and $S_a \leftrightsquigarrow S_b$ under $TAR$, there exists a shortest $TAR$ sequence that can be constructed in linear time adding and removing a vertex in each step. 

\begin{lemma}
    \label{lem:canon_tar}
	Let $(G, S_a, S_b, k+1)$ be a $TAR$-instance of \pname{Vertex Separator Reconfiguration}, where $S_a, S_b$ are a pair of $uv$-separators satisfying $|S_a| = |S_b| = k$.
	If $S_a \leftrightsquigarrow S_b$, there exists a shortest reconfiguration sequence $\langle S_a = S_1,..., S_\ell = S_b \rangle$ such that $|S_i| \geq k$ for all $i \in [\ell]$, with equality holding if and only if $i$ is odd.
\end{lemma}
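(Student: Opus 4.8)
The plan is to argue at the level of the \emph{size profile} of a reconfiguration sequence and to rewrite any shortest sequence into the desired alternating shape by local exchanges. Since the instance is $(k+1)$-TAR, every set occurring in a valid sequence has size at most $k+1$, so for a shortest sequence $\langle S_1,\dots,S_\ell\rangle$ I would record the heights $h_j = |S_{j+1}| - k$ and view them as a lattice path with steps $\pm1$, with $h_0 = h_{\ell-1} = 0$ and $h_j \le 1$ throughout. The target statement is then exactly that one may in addition assume $h_j \ge 0$ for all $j$: a $\pm1$ path confined to heights $\{0,1\}$ that starts and ends at $0$ is forced to alternate $0,1,0,1,\dots,0$, which translates precisely to $|S_i| \ge k$ with equality if and only if $i$ is odd (and incidentally forces $\ell$ to be odd).

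The engine of the proof is the monotonicity of separators: any superset of a $uv$-separator is again a $uv$-separator. Hence an \emph{addition} is always a valid step, whereas a \emph{removal} may fail, and this asymmetry is what lets me push the profile upward but never obliges me to push it downward. The key tool is the elementary exchange turning a consecutive remove-then-add pair $S \to S\setminus\{x\} \to (S\setminus\{x\})\cup\{y\}$ into the add-then-remove pair $S \to S\cup\{y\} \to (S\cup\{y\})\setminus\{x\}$. When $x \ne y$ the two pairs have the same endpoint, and the new intermediate $S\cup\{y\}$ is a separator because it is a superset of the separator $S$; so the exchange preserves validity, the length, and both endpoints $S_a,S_b$. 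In a \emph{shortest} sequence a consecutive remove-$x$-then-add-$x$ cannot occur, since it would return to the previous set and expose a length-$2$ shortcut; thus whenever I want to apply this exchange the two vertices are automatically distinct.

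With this tool I would eliminate the dips below $k$ one at a time. Let $j^{*}$ be the first index with $h_{j^{*}} < 0$; being the first such index and the steps being $\pm1$, we get $h_{j^{*}-1}=0$ and $h_{j^{*}}=-1$, so step $j^{*}$ is a removal, and because the path returns to height $0$ there is a later addition. Let step $m$ be the first addition after $j^{*}$, so steps $j^{*},\dots,m-1$ are all removals; I slide this addition leftward past those removals by repeated elementary exchanges until it occupies position $j^{*}$. Each exchange is valid as above, and a direct computation shows that the heights at positions $j^{*},\dots,m-1$ each increase by $2$ while all other heights are unchanged; in particular the only newly created maximum is $h_{j^{*}-1}+1 = 1$, so the bound $k+1$ is never violated, and the total area below the baseline strictly decreases. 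Iterating, this nonnegative integer potential forces termination in a sequence whose profile lies entirely in $\{k,k+1\}$, i.e.\ the claimed alternating sequence; since every step of the procedure preserves the number of operations, the result is still shortest, and the rewriting is fully constructive, reading the final sequence off as single additions interleaved with the matching removals.

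The main obstacle is precisely the interaction between the two bounds: the profile must be raised to stay at or above $k$ \emph{without} ever exceeding $k+1$. A naive ``move every addition before every removal'' would keep all intermediates separators by monotonicity, but it would inflate the profile to size $k+d$ and break the $(k+1)$-TAR constraint. The resolution is to move only one addition, only as far as the first dip, and only across a maximal block of removals; the computation that each such move raises the affected heights by exactly $2$ and creates a new height of at most $1$ is what guarantees the upper bound survives. I do not expect minimality of $S_a,S_b$ to be needed anywhere in this argument: only the superset-closure of separators and the optimality of the chosen sequence are used.
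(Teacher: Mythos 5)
Your proposal is correct and takes essentially the same route as the paper: your elementary exchange of a remove-then-add pair into an add-then-remove pair is exactly the paper's replacement of a minimum-size state $S_j$ by $S_j \cup \{a,b\}$, resting on the same two ingredients (supersets of separators are separators, and shortestness forces $a \neq b$) and the same kind of decreasing-potential termination argument. Your lattice-path bookkeeping and leftward sliding of additions amount to a finer-grained scheduling of the exchange the paper applies once at the global minimum, so the two proofs are essentially identical.
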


\begin{proof}
    Let $\mathcal{S}$ be a shortest reconfiguration sequence for $TAR(G, S_a, S_b, k+1)$ of length $\ell$.
    We modify $\mathcal{S}$ by the following algorithm that, at each step, selects exactly one state of the sequence and increases its size by two, while $\min \{|S_i| \mid i \in [\ell]\} < k$.
    If the algorithm is not done, take $j$ as the smallest index such that $|S_j| = \min \{|S_i| \mid i \in [\ell]\}$.
    Since $j \notin \{1, \ell\}$, $S_j = S_{j-1} \setminus \{a\}$ and  $S_j = S_{j+1} \setminus \{b\}$, and $a \neq b$ because $\mathcal{S}$ is a shortest sequence.
    Furthermore, it holds that $|S_j \cup \{a,b\}| \leq k+1$ and we can construct another sequence $\mathcal{S}' = \angled{S_1, \dots, S_{j-1}, S_j \cup \{a,b\}, S_{j+1}, \dots S_\ell}$ and set $\mathcal{S} \gets \mathcal{S}'$.
    When the algorithm stops, every $S_i$ satisfies $|S_i| \in \{k, k+1\}$ and, since it never modifies $S_1$ nor $S_\ell$ and every operation modifies the size of the state it acts upon, $|S_2| = k+1$; a direct inductive argument shows that $|S_{2i}| = k+1$ for all $i \in [\ell/2]$.
\end{proof}

In the following, let $\dist_{X}(G, S_a, S_b)$ be the size of a shortes reconfiguration sequence from $S_a$ to $S_b$ according to rule $X$.

\begin{lemma} \label{lem:tj_tar}
	Let $G$ be a graph and let $S_a, S_b$ be any pair of $uv$-separators in $G$ such that $|S_a| = |S_b| = k$.
	It holds $TJ(G, S_a, S_b) = TAR(G, S_a, S_b, k+1)$ and $\dist_{TJ}(G, S_a, S_b) = \dist_{(k+1)-TAR}(G, S_a,S_b)/2$.
\end{lemma}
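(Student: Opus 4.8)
The plan is to prove both claims at once—the decision equivalence and the exact distance identity—by giving mutually inverse translations between TJ sequences and canonical $(k+1)$-TAR sequences that exactly double (respectively halve) the number of moves. The whole argument rests on one elementary fact about separators: any superset of a $uv$-separator is again a $uv$-separator.

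First I would handle the direction TJ $\Rightarrow$ TAR. Take any TJ reconfiguration sequence $\langle S_a = T_1, \dots, T_m = S_b \rangle$ and replace each jump $T_i \to T_{i+1}$, where $T_{i+1} = (T_i \setminus \{a\}) \cup \{b\}$ with $a \neq b$, by the two TAR moves $T_i \to T_i \cup \{b\} \to (T_i \cup \{b\}) \setminus \{a\} = T_{i+1}$. Here the intermediate state $T_i \cup \{b\}$ is a legal separator by the superset property, it has size $k+1$, which respects the bound of $(k+1)$-TAR, and $T_{i+1}$ is a separator by validity of the original jump. This yields a valid $(k+1)$-TAR sequence with exactly twice as many moves, proving that a positive TJ instance gives a positive TAR instance and that $\dist_{(k+1)-TAR}(G, S_a, S_b) \leq 2\,\dist_{TJ}(G, S_a, S_b)$.

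For the converse I would invoke Lemma~\ref{lem:canon_tar}: whenever $S_a \leftrightsquigarrow S_b$ under $(k+1)$-TAR, there is a shortest such sequence $\langle S_1, \dots, S_\ell \rangle$ whose sizes alternate, equal to $k$ at odd indices and $k+1$ at even indices. I then pair consecutive moves in windows $S_{2i-1} \to S_{2i} \to S_{2i+1}$. Since $|S_{2i-1}| = |S_{2i+1}| = k$ and $|S_{2i}| = k+1$, each window is an addition followed by a removal, so $S_{2i} = S_{2i-1} \cup \{b\}$ and $S_{2i+1} = S_{2i} \setminus \{a\} = (S_{2i-1} \setminus \{a\}) \cup \{b\}$; thus $S_{2i+1}$ differs from $S_{2i-1}$ in exactly one vertex and is a size-$k$ separator, i.e. $S_{2i-1} \leftrightarrow S_{2i+1}$ is a single token jump. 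Collecting the odd-indexed states produces a TJ sequence with half as many moves, giving the reverse implication on the decision problems and $\dist_{TJ}(G, S_a, S_b) \leq \dist_{(k+1)-TAR}(G, S_a, S_b)/2$. Combining the two inequalities forces equality, so $\dist_{TJ}(G, S_a, S_b) = \dist_{(k+1)-TAR}(G, S_a, S_b)/2$ and $TJ(G, S_a, S_b) = TAR(G, S_a, S_b, k+1)$.

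The only genuinely delicate point is the backward translation: an arbitrary TAR sequence could drop below size $k$ or fail to alternate cleanly, and it is precisely the canonical form supplied by Lemma~\ref{lem:canon_tar} that lets me pair each ``add'' with the immediately following ``remove'' so that the composite is a well-defined single jump between two size-$k$ separators. Without that canonical structure, two consecutive TAR moves need not compose into a jump at all. The forward direction, by contrast, is routine once one observes the monotonicity of the separator property under taking supersets, which is exactly the feature of separators (as opposed to, say, independent sets) that dictates the ``add-then-remove'' order of the simulation.
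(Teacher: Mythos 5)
Your proposal is correct and takes essentially the same approach as the paper: your forward simulation, inserting the size-$(k+1)$ superset $T_i \cup \{b\}$ between consecutive TJ states, is exactly the paper's interleaved sequence $Q_{2i} = S_i \cup S_{i+1}$, and your converse likewise invokes Lemma~\ref{lem:canon_tar} to get the alternating canonical $(k+1)$-TAR sequence and extracts the odd-indexed size-$k$ states as jumps. Combining the two length inequalities to force the exact distance identity is also how the paper concludes.
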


\begin{proof}
    If $\mathcal{S}_{TJ} = \angled{S_1 = S_a, \dots, S_\ell = S_b}$ is a shortest TJ-reconfiguration sequence, we construct the sequence $\mathcal{Q}_{TAR} = \angled{Q_1, \dots, Q_r}$ such that $Q_{2i-1} \gets S_i$ for all $i \in [\ell]$ and, for all even $j \in [2\ell - 1]$, $Q_j = Q_{j-1} \cup Q_{j+1} = S_{j/2} \cup S_{j/2 + 1}$.
    Note that, in the latter case, $|Q_j| \leq k+1$, since $S_{j/2}$ and $S_{j/2 + 1}$ are consecutive states of $\mathcal{S}_{TJ}$; furthermore, $Q_j \setminus Q_{j+1}$ is empty if $j$ is even or has cardinality equal to one if $j$ is odd.
    As such, we conclude $\mathcal{Q}_{TAR}$ is a certificate for $TAR(G, S_a, S_b, k+1)$ and that $\dist_{(k+1)-TAR}(G, S_a, S_b) \leq 2\dist_{TJ}(G, S_a, S_b)$.
    
    For the converse, let $\mathcal{S}_{TAR} = \angled{S_1 = S_a, \dots, S_\ell = S_b}$ be a shortest $(k+1)$-TAR-reconfiguration sequence satisfying Lemma~\ref{lem:canon_tar}.
    We show that the sequence $\mathcal{Q}_{TJ} = \angled{Q_1, \dots, Q_r}$ where $Q_{j} \gets S_{2j-1}$, for all $j$ satisfying $1 \leq 2j -1 \leq \ell$, is a TJ-reconfiguration sequence.
    By Lemma~\ref{lem:canon_tar}, it holds that $|Q_j| = k$ and $Q_{j} \triangle Q_{j+1} = S_{2j-1} \triangle S_{2j+1} = \{a,b\}$ and the jump operation is executed in such a way that, say, $a$ is replaced by $b$.
    Moreover, $\dist_{TJ}(G, S_a, S_b) \leq \dist_{(k+1)-TAR}(G, S_a, S_b)/2$, concluding the proof.
\end{proof}

Observation~\ref{obs:tar_no} states that the value of $k+1$ imposed by the hypothesis of Lemma~\ref{lem:tj_tar} is optimal, i.e. there are instances $TJ(G, S_a, S_b) = \YES$ where at least one of the endpoints of the reconfiguration is a minimal separator but $TAR(G, S_a, S_b, k) = \NOi$.
Furthermore, it prunes trivially negative cases which, as we show further below, are the only ones that may not have a corresponding TJ instance.

\begin{observation}
    \label{obs:tar_no}
	Let $TAR(G, S_a,S_b,k)$ be an instance of \pname{Vertex Separator reconfiguration} such that $S_a \neq S_b$. If at least one of $S_a, S_b$ is minimal and has exactly $k$ vertices, then $TAR(G, S_a,S_b,k)= \NOi$.
\end{observation}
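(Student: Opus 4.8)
The plan is to show that the hypothesis forces one endpoint of the reconfiguration to be an isolated vertex of the reconfiguration graph $R_G(\pi, k\text{-TAR})$, so that no move out of it is admissible. Assume without loss of generality that it is $S_a$ that is minimal with $|S_a| = k$; the argument for $S_b$ is identical by symmetry. The key structural fact I would invoke is the direction of the bound in $k$-TAR: adjacency requires $\max(|S_i|, |S_j|) \le k$, so every state in any valid sequence must simultaneously be a $uv$-separator and have size at most $k$.

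Next I would simply examine the two atomic operations available at $S_a$. Adding a vertex produces a set of size $k+1$, which violates the cardinality constraint and is therefore not a state adjacent to $S_a$. Removing a vertex produces a proper subset $S' \subsetneq S_a$; because $S_a$ is a minimal $uv$-separator, no proper subset of $S_a$ is a $uv$-separator, so $S'$ fails to separate $u$ and $v$ and is not a valid state either. Hence neither atomic operation yields a neighbor, and $S_a$ has degree zero in $R_G(\pi, k\text{-TAR})$.

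Since $S_a$ is isolated and $S_a \neq S_b$, the two states lie in distinct connected components, so $S_a \not\leftrightsquigarrow S_b$ and the instance is $\NOi$. I do not expect a genuine obstacle here: the statement follows directly from the definition of a minimal separator together with the fact that the bound in $k$-TAR is an upper bound on the size of every intermediate separator. The only point requiring care is to confirm that the two forbidden directions are exactly the ones ruled out, namely that enlarging $S_a$ breaks the size bound while shrinking it breaks the separation property, leaving no admissible move and thereby pinning $S_a$ in place.
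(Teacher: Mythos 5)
Your proof is correct and is precisely the argument the paper intends: it states this as an observation without proof because the reasoning is immediate, namely that a minimal separator of size exactly $k$ is an isolated vertex of the reconfiguration graph under $k$-TAR, since adding a token violates the bound $\max(|S_i|,|S_j|)\leq k$ and removing one destroys the separation property by minimality. Your handling of the symmetry between $S_a$ and $S_b$ and the conclusion from $S_a \neq S_b$ are both sound, so there is nothing to add.
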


\begin{lemma}
    \label{lem:tar_tj}
    Let $TAR(G, S_a, S_b, k)$ be an instance of \pname{Vertex Separator Reconfiguration} that does not satisfy Observation~\ref{obs:tar_no}, $S_a'$ be a set of cardinality $k-1$ such that $S_a' \cap S_a$ contains a $uv$-separator and $S_b'$ be defined analogously.
    It holds that $TAR(G, S_a, S_b, k) = TAR(G, S_a', S_b', k) = TJ(G, S_a', S_b')$ and $\dist_{TJ}(G, S_a', S_b') = \dist_{k-TAR}(G, S_a',S_b')/2$.
\end{lemma}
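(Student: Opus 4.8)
The plan is to split the claim into two halves. The equality $TAR(G, S_a, S_b, k) = TAR(G, S_a', S_b', k)$ I would prove by linking each endpoint to its primed counterpart inside the $k$-TAR reconfiguration graph, while the remaining equality $TAR(G, S_a', S_b', k) = TJ(G, S_a', S_b')$ and the distance identity follow immediately from Lemma~\ref{lem:tj_tar}.

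The whole argument rests on the monotonicity of separators: if $T$ is a $uv$-separator and $T \subseteq W \subseteq V(G) \setminus \{u,v\}$, then $W$ is also a $uv$-separator, since enlarging a set that already disconnects $u$ from $v$ cannot reconnect them. I would state this first. The assumption that the instance avoids Observation~\ref{obs:tar_no} is exactly what makes the primed sets available: recalling that $|S_a|, |S_b| \le k$, when $|S_a| = k$ the set is non-minimal, hence strictly contains a $uv$-separator and we may delete one vertex to reach a separating set of size $k-1$; when $|S_a| < k$ we instead enlarge $S_a$ with vertices of $V(G) \setminus \{u,v\}$ up to cardinality $k-1$, which stays a separator by monotonicity. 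Either way $S_a'$ is a $uv$-separator of size $k-1$, and symmetrically for $S_b'$.

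Next I would construct an explicit $k$-TAR walk from $S_a$ to $S_a'$, routed through the separator $T_a \subseteq S_a' \cap S_a$ furnished by the hypothesis. First remove the vertices of $S_a \setminus T_a$ one by one, then add the vertices of $S_a' \setminus T_a$ one by one. Every set along the way contains $T_a$ and therefore separates $u$ from $v$, while its cardinality never exceeds $\max(|S_a|, |S_a'|) \le k$; thus the walk is legal and $S_a \leftrightsquigarrow S_a'$. The same recipe gives $S_b \leftrightsquigarrow S_b'$. Since being reconfigurable is an equivalence relation on the states, $S_a \leftrightsquigarrow S_b$ holds if and only if $S_a' \leftrightsquigarrow S_b'$, which is the first equality.

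Finally, $S_a'$ and $S_b'$ are $uv$-separators with $|S_a'| = |S_b'| = k-1$, so applying Lemma~\ref{lem:tj_tar} with parameter $k-1$ in place of $k$ yields both $TJ(G, S_a', S_b') = TAR(G, S_a', S_b', k)$ and $\dist_{TJ}(G, S_a', S_b') = \dist_{k-TAR}(G, S_a', S_b')/2$. I do not anticipate a real obstacle: the only delicate points are checking that no intermediate set exceeds the budget $k$ — which is precisely why it pays to route through the common sub-separator $T_a$ rather than attempting to move directly between $S_a$ and $S_a'$ — and observing that escaping Observation~\ref{obs:tar_no} is the exact condition guaranteeing the primed sets are well defined.
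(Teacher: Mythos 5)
Your proposal is correct and takes essentially the same route as the paper: the paper also establishes $TAR(G,S_a,S_b,k)=TAR(G,S_a',S_b',k)$ by noting that $S_a$ and $S_a'$ (resp.\ $S_b$ and $S_b'$) contain the same $uv$-separator, so each endpoint can be reconfigured into its primed counterpart within the budget $k$, and then obtains the second equality and the distance identity by applying Lemma~\ref{lem:tj_tar} with parameter $k-1$, which is licensed exactly because the instance escapes Observation~\ref{obs:tar_no}. Your write-up simply makes explicit what the paper leaves implicit (the monotonicity of separators, the remove-then-add walk through the common sub-separator $T_a$, and the existence of the primed sets), all of which is sound.
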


\begin{proof}
    To see that the first equality holds, note that $S_a$ ($S_b$) can be easily reconfigured into $S_a'$ ($S_b'$), since both $S_a$ ($S_b$) and $S_a'$ ($S_b'$) contain the \textit{same} $uv$-separator of $G$.
    For the second, note that the triple $(S_a', S_b', k)$ satisfy the hypotheses of Lemma~\ref{lem:tj_tar} precisely because neither $S_a$ nor $S_b$ are minimal and of size $k$.
\end{proof}

The direct application of Lemmas~\ref{lem:tj_tar}, \ref{lem:tar_tj}, and Observation~\ref{obs:tar_no} yield our equivalence theorem.

\begin{theorem}
    \label{thm:tj_tar_eq}
	Let $G$ be a graph,  $u,v \in V(G)$, and $S_a, S_b$ two $uv$-separators. The following statements hold:
	\begin{itemize}
		\item [(i)] If $|S_a| = |S_b| = k$, instance $TJ(G, S_a, S_b)$ has an equivalent TAR-instance $TAR(G, S_a, S_b, k+1)$ which can be built in linear time and $\dist_{TJ}(G, S_a, S_b) = \dist_{(k+1)-TAR}(G, S_a,S_b)/2$;
		\item[(ii)] If $|S_a| \leq |S_b| = k$ and $S_b$ is minimal, $TAR(G, S_a, S_b, k) = \NOi$;
		\item[(iii)] Otherwise, instance $TAR(G, S_a, S_b, k)$ has an equivalent TJ-instance $TJ(G, S_a', S_b')$ that can be generated in polynomial time satisfying $\dist_{TJ}(G, S_a', S_b') = \dist_{k-TAR}(G, S_a',S_b')/2$.
	\end{itemize}
\end{theorem}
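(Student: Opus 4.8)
The plan is to treat the theorem as an assembly of the three preceding results, dispatching each item to exactly one of them; accordingly, most of the effort goes into checking that the hypotheses line up and, for the third item, into producing the auxiliary separators efficiently. Throughout I would assume $S_a \neq S_b$, since any instance with $S_a = S_b$ is trivially positive and requires no argument.

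For item (i), the condition $|S_a| = |S_b| = k$ is precisely the hypothesis of Lemma~\ref{lem:tj_tar}, which immediately yields both the equivalence $TJ(G, S_a, S_b) = TAR(G, S_a, S_b, k+1)$ and the distance identity $\dist_{TJ}(G, S_a, S_b) = \dist_{(k+1)-TAR}(G, S_a, S_b)/2$. The corresponding TAR instance keeps $G$, $S_a$, and $S_b$ unchanged and merely sets the cardinality bound to $k+1 = |S_a|+1$, which is computable in linear time. For item (ii), the hypothesis that $|S_a| \le |S_b| = k$ with $S_b$ minimal means $S_b$ is a minimal $uv$-separator of size exactly $k$; together with $S_a \neq S_b$ this is exactly the premise of Observation~\ref{obs:tar_no} (with $S_b$ as the witnessing separator), so $TAR(G, S_a, S_b, k) = \NOi$.

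For item (iii), I would read "Otherwise" as the negation of the premise of Observation~\ref{obs:tar_no} for the TAR instance, i.e. neither $S_a$ nor $S_b$ is a minimal $uv$-separator of size exactly $k$. The substantive step is to build the sets $S_a'$ and $S_b'$ demanded by Lemma~\ref{lem:tar_tj}. Each of $S_a, S_b$ is a $uv$-separator and hence contains a minimal $uv$-separator, and I claim this minimal subset has size at most $k-1$: if $|S_x| < k$ this is immediate, while if $|S_x| = k$ then the case assumption forces $S_x$ to be non-minimal, so some proper subset of $S_x$ already separates $u$ from $v$. Greedily removing vertices from $S_x$ while preserving the separation property extracts such a minimal subseparator in polynomial time, and padding it with arbitrary vertices up to cardinality $k-1$ gives $S_a', S_b'$ whose intersections with $S_a, S_b$ still contain a $uv$-separator. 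Lemma~\ref{lem:tar_tj} then delivers $TAR(G, S_a, S_b, k) = TAR(G, S_a', S_b', k) = TJ(G, S_a', S_b')$ and $\dist_{TJ}(G, S_a', S_b') = \dist_{k-TAR}(G, S_a', S_b')/2$, with $S_a', S_b'$ manifestly produced in polynomial time.

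The only genuinely delicate part is item (iii): I must make sure the auxiliary sets satisfy the hypotheses of Lemma~\ref{lem:tar_tj}, namely that $S_a' \cap S_a$ (respectively $S_b' \cap S_b$) still contains a $uv$-separator, and that the target cardinality $k-1$ is actually attainable. Selecting the minimal subseparator from inside $S_x$ \emph{before} padding guarantees the first, and the size bound $\le k-1$ established above guarantees the second; once these two points are in hand, the remaining claims follow directly from the cited lemma.
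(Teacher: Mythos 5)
Your proof is correct and takes essentially the same route as the paper, which derives the theorem as a direct application of Lemma~\ref{lem:tj_tar} for item (i), Observation~\ref{obs:tar_no} for item (ii), and Lemma~\ref{lem:tar_tj} for item (iii). Your explicit construction of $S_a'$ and $S_b'$ (extracting a minimal subseparator, which the ``otherwise'' case bounds by $k-1$, then padding to cardinality $k-1$) merely fills in a detail the paper leaves implicit in the hypotheses of Lemma~\ref{lem:tar_tj}.
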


We cannot ascertain whether TS is equivalent or not to TAR/TJ in a similar sense as to the one given in Theorem~\ref{thm:tj_tar_eq}.
It is not surprising, however, that we cannot do so: as we show in Corollary~\ref{cor:completeness} of Theorem~\ref{thm:hardnesses}, if all three rules are equivalent even if only on a very specific subclass of bipartite graphs, we would have that $\NP = \PSPACE$.
Figure~\ref{fig:ts_tar_diff} presents an example where it is not immediate that one cannot reconfigure $S_a = \{u_1, u_2\}$ into $S_b = \{u_7, u_8\}$ under TS, but we can do so under TAR/TJ.

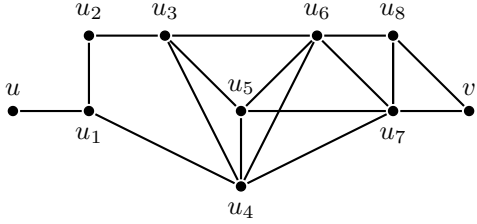
\begin{figure}[!htb]
    \centering
    \begin{tikzpicture}[scale=1]
            \GraphInit[unit=3,vstyle=Normal]
            \SetVertexNormal[Shape=circle, FillColor=black, MinSize=2pt]
            \tikzset{VertexStyle/.append style = {inner sep = \inners, outer sep = \outers}}
            \SetVertexLabelOut
            \Vertex[x=0, y = 0, Lpos=90, Math]{u_5}
            \Vertex[x=0, y = -1, Lpos=270, Math]{u_4}
            \Vertex[x=-1, y = 1, Lpos=90, Math]{u_3}
            \Vertex[x=1, y = 1, Lpos=90, Math]{u_6}
            
            \Vertex[x=-3, y = 0, Lpos=90, Math]{u}
            \Vertex[x=-2, y = 0, Lpos=270, Math]{u_1}
            \Vertex[x=-2, y = 1, Lpos=90, Math]{u_2}
            
            \Vertex[x=2, y = 0, Lpos=270, Math]{u_7}
            \Vertex[x=2, y = 1, Lpos=90, Math]{u_8}
            \Vertex[x=3, y = 0, Lpos=90, Math]{v}
            \Edges(u_4, u_1, u_2, u_3, u_5, u_6, u_3, u_4, u_6, u_7, u_5, u_4)
            \Edges(u, u_1)
            \Edges(u_4, u_7, u_8)
            \Edges(u_6, u_8, v, u_7, u_8)
    \end{tikzpicture}
    \caption{Under TAR/TJ, we can reconfigure $S_a = \{u_1, u_2\}$ into $S_b = \{u_7, u_8\}$, but not under TS.\label{fig:ts_tar_diff}}
\end{figure}
\section{Hardness results}

\citeauthor{Lokshtanov}~\cite{Lokshtanov} proved that \pname{Independent Set Reconfiguration} is \PSPACE-\Complete\ under TS and \NPc\ under TAR/TJ for bipartite graphs.
In this section, we present an  equivalence between independent sets and vertex separators in a subclass of bipartite graphs and, using this relationship, give a reduction from \pname{Independent Set Reconfiguration} on bipartite graphs to \pname{Vertex Separator Reconfiguration} on the same class, concluding that the latter is \PSPACE-\Hard\ under TS and \NP-\Hard \ under TAR/TJ for bipartite graphs.
We say that a bipartite graph is \textit{peanut-like} if there is a pair of vertices $u \in B$, $v \in A$ such that $N(u) = A \setminus \{v\}$ and $N(v) = B \setminus \{u\}$; in this case we say that $u$ and $v$ are the \textit{foci} of $G$.
Our reductions show that, for peanut-like bipartite graphs, \pname{Vertex Separator Reconfiguration} is \NPc\ under TAR/TJ and \PSPACE-\Complete\ under TS.

\begin{lemma}\label{lem:stable_trans}
	Let $G=(A \cup B, E)$ be a bipartite graph with partition $A,B$, $H$ be the graph constructed from $G$ by adding two vertices $u$ and $v$ to $G$ such that $N(u)= A$ and $N(v)=B$. A set $I \subset V(G)$ is independent if and only if $V(G) \setminus I$ is a $uv$-separator in $H$.
\end{lemma}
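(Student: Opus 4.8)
The plan is to observe that removing $S = V(G)\setminus I$ from $H$ leaves exactly the induced subgraph $H[I \cup \{u,v\}]$, since $V(H)\setminus\{u,v\} = V(G)$. Thus $S$ is a $uv$-separator precisely when $u$ and $v$ lie in different connected components of $H[I\cup\{u,v\}]$, and the whole lemma reduces to analysing which $uv$-paths survive inside $I\cup\{u,v\}$. The only edges present there are the edges of $G$ between vertices of $I$, the edges from $u$ to $A\cap I$, and the edges from $v$ to $B\cap I$; recall that $N(u)=A$, $N(v)=B$, and that $u,v$ are non-adjacent.

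For the forward direction I would assume $I$ is independent and show that $S$ separates $u$ from $v$. Since $I$ is independent in $G$, $H[I\cup\{u,v\}]$ contains no edge inside $I$, so the only neighbours of $u$ are vertices of $A\cap I$ and the only neighbours of $v$ are vertices of $B\cap I$. A vertex $a\in A\cap I$ has no neighbour in $I$ by independence, and is not adjacent to $v$ because $N(v)=B$ and $a\in A$; hence every neighbour of $u$ is a dead end and no walk from $u$ can ever reach $v$. Therefore $S$ is a $uv$-separator.

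For the converse I would argue by contrapositive, assuming $I$ is not independent and exhibiting a surviving $uv$-path in $H-S$. Pick an edge $ab\in E(G)$ with $a,b\in I$. Because $G$ is bipartite with parts $A,B$, the endpoints split across the sides, say $a\in A\cap I$ and $b\in B\cap I$. Then $u-a-b-v$ is a path in $H$ whose internal vertices $a,b$ lie in $I$, so it is entirely contained in $H[I\cup\{u,v\}]=H-S$, witnessing that $S$ is not a $uv$-separator. Combining the two directions yields the claimed equivalence.

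The argument is short, and the only genuine point to get right is the use of bipartiteness: it is precisely what guarantees that any edge inside $I$ has one endpoint adjacent to $u$ and the other adjacent to $v$, so that a single such edge immediately produces the short connecting path, while the absence of such edges completely isolates $u$'s side from $v$'s side. I do not expect a substantial obstacle; the care needed is purely in verifying that $A$-vertices are never adjacent to $v$ and $B$-vertices never adjacent to $u$, which is immediate from $N(u)=A$ and $N(v)=B$.
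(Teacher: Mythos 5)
Your proof is correct and takes essentially the same approach as the paper's: both directions reduce to the observation that an edge $ab$ with $a \in A \cap I$ and $b \in B \cap I$ yields the surviving path $u$--$a$--$b$--$v$ in $H - (V(G)\setminus I)$, while independence of $I$ leaves $u$ and $v$ in disjoint stars, your contrapositive formulation of the converse being a trivial logical rephrasing of the paper's argument by contradiction. If anything, your dead-end argument for the forward direction is slightly more careful than the paper's, which asserts a surviving path of the specific length-three form $\{u,x,y,v\}$ without noting that longer $u$--$v$ paths would equally contain an edge inside $I$.
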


\begin{proof}
	Let $I$ be  an independent set of $G$.
	Towards a contradiction, suppose that $V(G)\setminus I$ is not a $uv$-separator in $H$: thus there is some path $\{u, x, y, v\}$ between $u$ and $v$ in $H$ such that $x,y \notin V(G) \setminus I$, implying that $I$ is not independent.
	
	Conversely, let $I$ be a subset of $V(G)$ such that $V(G) \setminus I$ is a $uv$-separator in $H$ but suppose that $I$ is not an independent set of $G$, i.e., there are two adjacent vertices $x,y \in I$ with $x \in A$ and $y \in B$.
	Then there is a path $\{u, x, y, v\}$ from $u$ to $v$ in $H - (V(G) \setminus I)$, contradicting the hypothesis that $V(G) \setminus I$ is a $uv$-separator of $H$. 	
\end{proof}

\begin{corollary}
    Let $ H $ be an $n$-vertex peanut-like bipartite graph with $u,v$ as its foci and $S_a, S_b$ two $uv$-separators of $H$.
    If $S_a$ can be reconfigured into $S_b$, then there is a reconfiguration sequence between them of length $\bigO{n^4}$.
\end{corollary}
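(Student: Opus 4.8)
The plan is to reduce the claim to a known polynomial length bound for reconfiguration of vertex covers (equivalently, independent sets) in bipartite graphs, using Lemma~\ref{lem:stable_trans} as the bridge; since \pname{Vertex Separator Reconfiguration} under TS is shown to be \PSPACE-\Complete\ on this very class, the corollary can only be about the size-preserving/TAR regime, so I read it as a statement about TAR/TJ.

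First I would make the correspondence of Lemma~\ref{lem:stable_trans} explicit at the level of reconfiguration graphs. Writing $G = H - \{u,v\}$, every $uv$-separator $S$ of $H$ satisfies $S \subseteq V(G)$ by definition, and by the lemma $S$ is a separator exactly when $V(G) \setminus S$ is independent in $G$, i.e. exactly when $S$ is a vertex cover of $G$. Because the ground set $V(G)$ and the atomic operations (add or remove a vertex for TAR, swap one vertex for TJ) are literally the same on both sides, the reconfiguration graph of $uv$-separators of $H$ is isomorphic to the reconfiguration graph of vertex covers of $G$ under the corresponding rule. In particular $S_a \leftrightsquigarrow S_b$ in $H$ holds iff the same holds for the corresponding vertex covers in $G$, and matching sequences have equal length.

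Next I would invoke the polynomial bound for \pname{Vertex Cover Reconfiguration} on bipartite graphs under TAR of \citeauthor{Lokshtanov}~\cite{Lokshtanov}: two vertex covers of a bipartite graph on $N$ vertices that lie in the same component are joined by a TAR-sequence of length polynomial in $N$. Applying this to $G$, for which $N = n-2$, bounds the separator TAR-sequence by the same polynomial in $n$. To state the bound for the size-preserving rule and to fold the constants into a single estimate, I would then pass between TAR and TJ via Theorem~\ref{thm:tj_tar_eq}, which costs only a factor of two in length and excludes precisely the trivially-negative cases; the slack is ample enough to report the resulting polynomial as $\bigO{n^4}$.

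The main obstacle is concentrated entirely in the second step, and it is why this is a genuine corollary rather than a triviality: the reconfiguration graph here can have exponentially many vertices, since minimal separators of $H$ correspond to maximal independent sets of $G$ and a bipartite graph may have exponentially many of them. Hence the naive bound ``length $\le$ number of states'' is useless, and the real content is that reconfigurable pairs are joined by \emph{short} paths. I would rely on the bipartite-specific shortcutting behind Lokshtanov's theorem — a matching/flow argument that resolves the symmetric difference $S_a \triangle S_b$ with a bounded number of augmenting moves — rather than reproving it; were one to insist on a self-contained proof, reconstructing that shortcutting argument for bipartite vertex covers would be the crux.
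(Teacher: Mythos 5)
Your proposal is correct and follows essentially the same route as the paper: both use Lemma~\ref{lem:stable_trans} to identify the $uv$-separators of $H$ with the vertex covers (complements of independent sets) of $G = H - \{u,v\}$, and then invoke Theorem~3 of \citeauthor{Lokshtanov}~\cite{Lokshtanov} on bipartite graphs under TAR to obtain the $\bigO{n^4}$ length bound. Your extra remarks --- the explicit isomorphism of reconfiguration graphs and the factor-two TAR/TJ translation via Theorem~\ref{thm:tj_tar_eq} --- merely make explicit what the paper's shorter proof leaves implicit.
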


\begin{proof}
    Let $G = H - \{u,v\}$; by Lemma~\ref{lem:stable_trans}, there is a one to one correspondence between $uv$-separators of $H$ and independent sets of $G$.
    Furthermore, by Theorem 3 of~\cite{Lokshtanov}, there is a reconfiguration sequence between $V(G) \setminus S_a$ and $V(G) \setminus S_b$ if and only if there is some sequence of length $\bigO{|V(G)|^4}$; again by Lemma~\ref{lem:stable_trans}, this implies that there is a reconfiguration sequence between $S_a$ and $S_b$ of polynomial length.
\end{proof}

\begin{theorem}
    \label{thm:hardnesses}
	\pname{Vertex Separator Reconfiguration} on bipartite graphs, under TAR/TJ is \NP-\Hard\  and \PSPACE-\Hard\ under TS.
\end{theorem}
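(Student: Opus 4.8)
The plan is to give a polynomial-time reduction from \pname{Independent Set Reconfiguration} on bipartite graphs, which by the results of \citeauthor{Lokshtanov}~\cite{Lokshtanov} is \PSPACE-\Complete\ under TS and \NPc\ under TAR/TJ. Given such an instance with bipartite graph $G = (A \cup B, E)$ and independent sets $I_a, I_b$, I would build the graph $H$ exactly as in Lemma~\ref{lem:stable_trans}, adding foci $u,v$ with $N(u) = A$ and $N(v) = B$, and set $S_a = V(G) \setminus I_a$ and $S_b = V(G) \setminus I_b$. This construction is linear; $H$ is bipartite with parts $A \cup \{v\}$ and $B \cup \{u\}$ (so the output stays in the target class) and is in fact peanut-like with foci $u,v$; the foci are non-adjacent; and, crucially, $H[V(G)] = G$, since every edge added is incident to a focus.

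The core of the argument is that the complementation map $\phi(I) = V(G) \setminus I$ is an isomorphism between the reconfiguration graph of independent sets of $G$ and that of $uv$-separators of $H$, under both TJ and TS; this makes reconfigurability identical on the two sides. That $\phi$ is a bijection on the vertex sets is precisely Lemma~\ref{lem:stable_trans}, applied in both directions. For edges under TJ, I would observe that complementation preserves symmetric differences, $S_i \triangle S_j = I_i \triangle I_j$, and flips cardinalities, $|S_i| = |V(G)| - |I_i|$; hence $|I_i| = |I_j|$ and $|I_i \triangle I_j| = 2$ hold if and only if the same is true of $S_i, S_j$, giving the equivalence of TJ-adjacency immediately.

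The TS case is where care is needed, and I expect it to be the main obstacle. Since every $uv$-separator of $H$ is a subset of $V(H) \setminus \{u,v\} = V(G)$, and $H[V(G)] = G$, the two single vertices witnessing a token slide between adjacent separators are always vertices of $G$ joined by an edge of $G$; conversely every edge of $G$ is an edge of $H$. Thus a slide $w \mapsto w'$ between two independent sets of $G$ corresponds exactly to the slide $w' \mapsto w$ between their complementary separators in $H$, and vice versa, with no focus-incident edge ever being usable. This shows $\phi$ is also a TS-isomorphism; everything else reduces to the bookkeeping of the previous paragraph.

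Having established the isomorphisms, connectivity of the two reconfiguration graphs coincides, so $I_a \leftrightsquigarrow I_b$ if and only if $S_a \leftrightsquigarrow S_b$ under the corresponding rule. Consequently \pname{Vertex Separator Reconfiguration} is \PSPACE-\Hard\ under TS and \NP-\Hard\ under TJ. Finally, the result under TAR follows from the TJ case together with Theorem~\ref{thm:tj_tar_eq}(i): since $|S_a| = |V(G)| - |I_a| = |V(G)| - |I_b| = |S_b|$, the obtained TJ-instance converts in linear time into an equivalent TAR-instance. (Alternatively, one checks directly that $\phi$ is an isomorphism between the $k$-TAR reconfiguration graph of $G$ and the $(|V(G)|-k)$-TAR graph of $H$, the lower bound on independent-set size turning into an upper bound on separator size.)
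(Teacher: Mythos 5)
Your proposal is correct and matches the paper's proof essentially step for step: the same construction of $H$ with foci $u,v$, the same complementation map $S_i = V(G) \setminus I_i$ justified by Lemma~\ref{lem:stable_trans}, the same observation that $H[V(G)]$ is isomorphic to $G$ to handle token sliding, and the same appeal to Theorem~\ref{thm:tj_tar_eq} to transfer \NP-\Hness\ from TJ to TAR. Your packaging of the correspondence as an isomorphism of reconfiguration graphs (together with the explicit checks that $H$ is peanut-like bipartite and that focus-incident edges are never usable) is merely a more structured presentation of the paper's sequence-by-sequence argument, not a different route.
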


\begin{proof}
	Our reduction is from \pname{Independent Set Reconfiguration} under TJ on bipartite graphs, shown to be \NP-\Hard\ in~\cite{Lokshtanov}.
	Let $(G, I_a, I_b)$ an $ISR_{TJ}$-instance of \pname{Independent Set Reconfiguration} under $TJ$; our  $(H, S_a, S_b)$ $TJ$-instance of \pname{Vertex Separator Reconfiguration} is built as follows:
	graph $H$ is defined by $V(H) = V(G) \cup \{u,v\}$ and $E(H) = \{ua \mid a \in A\} \cup \{bv \mid b \in B\} \cup E(G)$, $S_a = V(G) \setminus I_a$, and $S_b = V(G) \setminus I_b$.
	
	Now suppose that $\langle I_1, \dots, I_r \rangle$ is a reconfiguration sequence of independent sets of $G$, we construct the sequence $\langle S_1, \dots, S_r \rangle$ by setting $S_i = V(G) \setminus I_i$.
	By Lemma~\ref{lem:stable_trans}, each $S_i$ is a $uv$-separator, $S_1 = S_a$, $S_r = S_b$ $|S_i| = |S_{i+1}|$, moreover, since there is exactly one $z \in I_i \setminus I_{i+1}$, exactly one $w \in I_{i+1} \setminus I_i$ and $S_i = V(G) \setminus I_i$, it holds that $z \in S_{i+1} \setminus S_i$ and $w \in S_i \setminus S_{i+1}$, showing that $\langle S_1, \dots, S_r, \rangle$ is a $uv$-separator reconfiguration sequence.
	The converse follows the exact same argumentation, and we omit it for brevity.
	
	Due to the TAR/TJ equivalence given by Theorem~\ref{thm:tj_tar_eq}, it also holds that \pname{Vertex Separator Reconfiguration} is  \NP-\Hard\ under TAR on bipartite graphs.
	For TS, given an instance $ISR_{TS}(G, I_a, I_b)$ of \pname{Independent Set Reconfiguration} under TS, which is \PSPACE-\Hard, we proceed exactly as we did when considering TJ and note that the unique vertices $z \in I_i \setminus I_{i+1}$ and $w \in I_{i+1} \setminus I_i$ are adjacent in $H$ since $H[V(G)]$ is isomorphic to $G$.
\end{proof}

\begin{corollary}
    \label{cor:completeness}
    \pname{Vertex Separator Reconfiguration} on peanut-like bipartite graphs under TAR/TJ is \NPc\ and under TS is \PSPACE-\Complete.
\end{corollary}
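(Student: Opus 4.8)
The plan is to combine the hardness already in hand from Theorem~\ref{thm:hardnesses} with matching complexity upper bounds, the point being that the reduction used there in fact outputs peanut-like graphs. So the first thing I would do is verify this. In the graph $H$ built in that reduction we set $N(u) = A$ and $N(v) = B$, and since $u,v$ are fresh vertices we have $uv \notin E(H)$. Taking the bipartition of $H$ to be $(A \cup \{v\}, B \cup \{u\})$, the vertex $u$ lands on the $B$-side with $N_H(u) = A = (A \cup \{v\}) \setminus \{v\}$, and $v$ lands on the $A$-side with $N_H(v) = B = (B \cup \{u\}) \setminus \{u\}$. This is exactly the peanut-like condition with $u,v$ as foci. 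Hence the \NP-hardness under TAR/TJ and the \PSPACE-hardness under TS established in Theorem~\ref{thm:hardnesses} transfer verbatim to peanut-like bipartite graphs, with no new construction needed.

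It then remains to supply the membership results. For TAR/TJ, I would invoke the corollary preceding Theorem~\ref{thm:hardnesses}: on an $n$-vertex peanut-like bipartite graph, any reconfigurable pair of $uv$-separators admits a reconfiguration sequence of length $\bigO{n^4}$. Such a sequence is a certificate of polynomial size whose validity is checkable in polynomial time, since one only needs to confirm that each intermediate set is a genuine $uv$-separator of the correct cardinality and that consecutive sets differ by a single jump (respectively a single addition or removal within the size bound). This places the problem in \NP, and together with the hardness gives \NPc\ under TAR/TJ; membership under TAR alternatively follows from the TAR/TJ equivalence of Theorem~\ref{thm:tj_tar_eq}.

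For TS I would rely on the generic observation that these token-reconfiguration problems lie in \PSPACE. A state is a subset of $V(H)$ and hence representable in polynomial space; testing whether a state is a valid $uv$-separator and whether two states are adjacent under TS are both polynomial-time tasks. Deciding reachability in the (exponentially large but implicitly described) reconfiguration graph is therefore a nondeterministic polynomial-space computation, and by Savitch's theorem nondeterministic polynomial space equals \PSPACE. Combined with the \PSPACE-hardness this yields \PSPACE-\Complete\ under TS.

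The only step demanding genuine care — and the one I would flag as the main, if modest, obstacle — is the bookkeeping that certifies the reduction's output is truly peanut-like: correctly assigning each focus to its side of the new bipartition and confirming the non-adjacency $uv \notin E(H)$. Once that is pinned down, the corollary is a direct assembly of Theorem~\ref{thm:hardnesses}, the $\bigO{n^4}$ length bound, and the standard \PSPACE\ containment of reconfiguration.
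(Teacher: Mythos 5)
Your proposal is correct, and it matches the paper on the hardness half: the paper's proof likewise just observes that the graph $H$ built in Theorem~\ref{thm:hardnesses} is peanut-like (your explicit bipartition check with foci $u,v$ is the verification the paper leaves implicit), and for \NP-membership under TAR/TJ it also appeals to certificate verification --- though you are in fact more careful than the paper, which merely says that inspecting each state of a given sequence suffices, leaving unstated the $\bigO{n^4}$ length bound from the corollary of Lemma~\ref{lem:stable_trans} that you correctly identify as necessary for the certificate to have polynomial size. Where you genuinely diverge is the \PSPACE-membership argument for TS: the paper does not use the generic Savitch-style argument on the implicitly described reconfiguration graph, but instead exploits the structure of the class once more --- it removes the two foci and runs the known polynomial-space algorithm for \pname{Independent Set Reconfiguration} on the remainder, which is correct by the complementation bijection of Lemma~\ref{lem:stable_trans} (and the move-preservation noted in the proof of Theorem~\ref{thm:hardnesses}). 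Your route is more general, since the nondeterministic-reachability-plus-Savitch argument places \pname{Vertex Separator Reconfiguration} under TS in \PSPACE\ on \emph{all} graphs, not just peanut-like ones, and it avoids re-invoking the separator/independent-set correspondence; the paper's route buys a concrete algorithm that delegates to existing machinery and directly outputs a reconfiguration sequence for positive instances. Both are sound, so the difference is one of style and scope rather than substance.
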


\begin{proof}
    For the first statement, the instance constructed on the proof of Theorem~\ref{thm:hardnesses} is a peanut-like bipartite graph, and an algorithm that inspects each state to decide whether a given sequence is a TAR/TJ reconfiguration sequence or not suffices to prove membership in \NP.
    For TS, on the other hand, we execute the following algorithm: given a peanut-like bipartite graph, we remove its foci and use the polynomial space algorithm for \pname{Independent Set Reconfiguration}, which outputs a reconfiguration sequence if and only if the answer to the original \pname{Vertex Separator Reconfiguration} instance is \YES.
\end{proof}

\section{Polynomial time results}

\citeauthor{Milanic}~\cite{Milanic} studied the behavior of the family of minimal vertex separators on graph classes defined by forbidden families of small induced subgraphs.
Using their nomenclature, a graph class $\mathcal{G}$ is \textit{tame} if the family of minimal vertex separators of each $G \in \mathcal{G}$, denoted by $\mathbf{S}$, has size bounded by a polynomial $p_{\mathcal{G}}$ evaluated at $|V(G)|$.
The opening result of this section states that if $G$ belongs to a tame class, then \pname{Vertex Separator Reconfiguration} is solvable in polynomial time.

\begin{lemma} \label{lem:overlap_graph}
    Let $u,v$ be two vertices of $G$, $\mathbf{S}_{uv}(G)$ be the family of minimal $uv$-separators of $G$, and
    $H_{uv}$ be the graph where $V(H_{uv})= \mathbf{S}_{uv}(G)$ and $E(H_{uv})=\{S_i,S_j \ \ \vert \  \  S_i,S_j \in  \mathbf{S}_{uv}(G) \ \ and \ \ \vert S_i\cup S_j \vert \leq k \}$.
    For any two $uv$-separators $S_a, S_b$ of $G$, $TAR(G, S_a,S_b,k) = \YES$ if and only if there exists a path from $S_a'$ to $S_b'$ in $H_{uv}$, where $S_a'$ and  $S_b'$ are minimal $uv$-separators of $G$ with $S_a' \subseteq S_a$ and $S_b' \subseteq S_b$.
\end{lemma}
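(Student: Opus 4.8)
The plan is to reduce both directions to two elementary $k$-TAR maneuvers that rely only on the fact that every superset of a $uv$-separator is again a $uv$-separator. First I would record the \emph{reduction maneuver}: if $S$ is a $uv$-separator with $|S| \le k$ and $T \subseteq S$ is a minimal $uv$-separator, then $S \leftrightsquigarrow T$, since deleting the vertices of $S \setminus T$ one at a time keeps every intermediate set a superset of $T$ (hence a separator) of size at most $|S| \le k$. Second, the \emph{bridging maneuver}: if $P, Q$ are minimal $uv$-separators with $|P \cup Q| \le k$, then $P \leftrightsquigarrow Q$, by first adding the vertices of $Q \setminus P$ one by one to reach $P \cup Q$ (never exceeding size $k$) and then deleting the vertices of $P \setminus Q$ one by one, every set on the way being a superset of $P$ or of $Q$. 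Throughout I would assume $|S_a|, |S_b| \le k$, since otherwise an endpoint is not a valid $k$-TAR state and the instance is negative.

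For the backward direction, given a path $S_a' = P_0, P_1, \dots, P_m = S_b'$ in $H_{uv}$ with $S_a' \subseteq S_a$ and $S_b' \subseteq S_b$, each edge $P_iP_{i+1}$ satisfies $|P_i \cup P_{i+1}| \le k$ by the definition of $E(H_{uv})$, so the bridging maneuver gives $P_i \leftrightsquigarrow P_{i+1}$; combining this with $S_a \leftrightsquigarrow S_a'$ and $S_b \leftrightsquigarrow S_b'$ from the reduction maneuver yields $S_a \leftrightsquigarrow S_b$, i.e. $TAR(G, S_a, S_b, k) = \YES$.

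For the forward direction, I would take a $k$-TAR reconfiguration sequence $\angled{S_a = S_1, \dots, S_\ell = S_b}$, fix in each state a minimal $uv$-separator $T_i \subseteq S_i$, and choose $T_1 = S_a'$, $T_\ell = S_b'$. Because consecutive states differ in exactly one vertex, one contains the other and $|S_i \cup S_{i+1}| = \max(|S_i|, |S_{i+1}|) \le k$; as $T_i \cup T_{i+1} \subseteq S_i \cup S_{i+1}$, this forces $|T_i \cup T_{i+1}| \le k$, so $T_i$ and $T_{i+1}$ are equal or adjacent in $H_{uv}$. Deleting repetitions, $\angled{T_1, \dots, T_\ell}$ becomes a walk from $S_a'$ to $S_b'$ in $H_{uv}$, which contains a path between them.

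The step I expect to be the most delicate is the cardinality bookkeeping in the forward direction: the argument crucially uses that $k$-TAR adjacency bounds the \emph{union} $|S_i \cup S_{i+1}|$ rather than the individual sizes, and that replacing each state by a minimal separator inside it can only shrink that union. I would also remark that the choice of $S_a'$ and $S_b'$ is immaterial, since any two minimal separators inside a single state of size at most $k$ have union of size at most $k$ and are therefore adjacent in $H_{uv}$.
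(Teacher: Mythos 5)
Your proof is correct and follows essentially the same route as the paper's: the backward direction is the same greedy add-then-delete reconfiguration along each edge of the path in $H_{uv}$ (with the endpoint reduction $S_a \leftrightsquigarrow S_a'$ made explicit), and the forward direction is the same projection of the $k$-TAR sequence onto minimal separators inside each state, using $|T_i \cup T_{i+1}| \leq |S_i \cup S_{i+1}| \leq k$ to get adjacency in $H_{uv}$. The only cosmetic difference is that the paper tracks the family of \emph{all} minimal separators inside each state (observing that consecutive families span a clique of $H_{uv}$), whereas you fix one representative per state; your closing remark that the choice of $S_a'$, $S_b'$ is immaterial is exactly the content of that clique observation.
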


\begin{proof}
    Let  $S_a, S_b$ two $uv$-separators in $G$.
    Suppose that $TAR(G, S_a, S_b, k) = \YES$, let $\langle S_1, \dots, S_r \rangle$ be a reconfiguration sequence between $S_a$ and $S_b$, and let $\langle \mathcal{S}_1, \dots \mathcal{S}_r\rangle$ be a sequence such that $\mathcal{S}_i$ is the family of all minimal $uv$-separators that are subsets of $S_i$.
    Note that $\mathcal{S}_i \cup \mathcal{S}_{i+1}$ is a clique of $H_{uv}$ since, for any $A \in \mathcal{S}_i$ and $B \in \mathcal{S}_{i+1}$, $|A \cup B| \leq |S_i \cup S_{i+1}| \leq k$.
    Thus, there is a path between $S_a'$ and $S_b'$ in $H_{uv}$.
    
    For the converse, let $\langle S_a', S_1', \dots, S_r', S_b' \rangle$ be some path between $S_a'$ and $S_b'$ in $H_{uv}$; note that since $|S_i' \cup S_{i+1}'| \leq k$, we can greedily reconfigure $S_i'$ into $S_{i+1}'$ without violating the cardinality constraint; by a straightforward inductive argument, we can reconfigure $S_a'$ into $S_b'$ and, consequently, $S_a \leftrightsquigarrow S_b$.
\end{proof}

\begin{theorem} \label{separadorespolinomial}
	If $G$ has a polynomially bounded number of minimal vertex separators, then \pname{Vertex Separator Reconfiguration} can be solved in polynomial time under TAR/TJ. 
\end{theorem}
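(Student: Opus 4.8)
The plan is to reduce the problem, under both rules, to a reachability query in the overlap graph $H_{uv}$ of Lemma~\ref{lem:overlap_graph}, exploiting the fact that tameness forces this graph to have polynomial size. Recall that, by Theorem~\ref{thm:tj_tar_eq}, a TJ-instance $TJ(G, S_a, S_b)$ with $|S_a| = |S_b| = k$ is equivalent to $TAR(G, S_a, S_b, k+1)$; hence it suffices to give a polynomial time algorithm for the TAR version and invoke this equivalence to settle TJ. The only TAR-instances not captured by the reduction are the trivially negative ones identified by Observation~\ref{obs:tar_no}, and these can be detected in polynomial time by checking the minimality and cardinality of the endpoints.

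First I would enumerate the family $\mathbf{S}_{uv}(G)$ of minimal $uv$-separators. Minimal separators can be listed with polynomial delay, so the total running time of the enumeration is the product of this delay and $|\mathbf{S}_{uv}(G)|$; since $G$ belongs to a tame class, $|\mathbf{S}_{uv}(G)| \leq |\mathbf{S}| = \bigO{p_{\mathcal{G}}(|V(G)|)}$, and the enumeration terminates in polynomial time. With this family in hand, I would build $H_{uv}$ explicitly: its vertex set is $\mathbf{S}_{uv}(G)$ and, for the relevant threshold, I add an edge between $S_i$ and $S_j$ whenever $|S_i \cup S_j| \leq k$ (respectively $\leq k+1$ for the TJ reduction). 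As $H_{uv}$ has polynomially many vertices, testing all pairs to decide its edges takes polynomial time.

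Next I would locate the endpoints of the reachability query inside $H_{uv}$. Since $S_a$ is a $uv$-separator, a minimal $uv$-separator $S_a' \subseteq S_a$ is obtained by greedily discarding vertices of $S_a$ while the set remains a $uv$-separator, which amounts to a polynomial number of connectivity checks; $S_b'$ is obtained analogously. Having identified the vertices of $H_{uv}$ corresponding to $S_a'$ and $S_b'$, a single breadth-first search decides whether they lie in the same connected component. By Lemma~\ref{lem:overlap_graph}, this is exactly the condition $TAR(G, S_a, S_b, k) = \YES$, and composing with Theorem~\ref{thm:tj_tar_eq} settles the TJ case by running the same procedure on the $(k+1)$-threshold overlap graph.

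The only step that is not immediate bookkeeping — and hence the main obstacle — is the polynomial time enumeration of $\mathbf{S}_{uv}(G)$: the definition of tameness bounds only the \emph{number} of minimal separators, whereas we additionally need to \emph{produce} them efficiently. This gap is bridged precisely by polynomial-delay enumeration of minimal separators, which guarantees that a polynomial bound on the count upgrades to a polynomial bound on the listing time; once $H_{uv}$ is available, everything else reduces to standard graph reachability.
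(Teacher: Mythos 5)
Your proposal is correct and follows essentially the same route as the paper: enumerate the minimal $uv$-separators in polynomial time (the paper cites the $\bigO{|\mathbf{S}_{uv}|n^3}$ generation algorithm, which is the same polynomial-delay fact you invoke), build the overlap graph $H_{uv}$ of Lemma~\ref{lem:overlap_graph}, and decide the instance by a reachability query between minimal separators contained in $S_a$ and $S_b$. The extra bookkeeping you supply (greedy extraction of $S_a', S_b'$, the threshold $k+1$ for the TJ case via Theorem~\ref{thm:tj_tar_eq}, and screening out the trivially negative instances of Observation~\ref{obs:tar_no}) is left implicit in the paper but is consistent with its argument.
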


\begin{proof}
    Along with the result of \citeauthor{separator_generation}~\cite{separator_generation} that the family of minimal separators of an $n$-vertex graph $G$ can be generated in $\bigO{|\mathbf{S}_{uv}|n^3}$, Lemma~\ref{lem:overlap_graph} directly implies that it suffices to construct $H_{uv}$ and check if there is some minimal separator contained in $S_a$ in the same connected component of a minimal separator contained in $S_b$.
    Since $H_{uv}$ has a number of vertices polynomial on the size of $G$, this algorithm runs in time polynomial in $n$.
\end{proof}

\begin{corollary}
	\pname{Vertex Separator Reconfiguration} can be solved in polynomial time for chordal graphs under TAR/TJ.
\end{corollary}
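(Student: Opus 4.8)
The plan is to reduce this to Theorem~\ref{separadorespolinomial} by establishing that chordal graphs form a tame class, i.e. that they have a polynomially bounded number of minimal vertex separators. Concretely, I would invoke the classical structural characterization of chordal graphs via clique trees: the maximal cliques of a chordal graph $G$ can be organized into a tree $T$ (a clique tree) in which every minimal vertex separator of $G$ arises as the intersection of two cliques adjacent in $T$. Since a chordal graph on $n$ vertices has at most $n$ maximal cliques, the clique tree $T$ has at most $n-1$ edges, and therefore $G$ has at most $n-1$ distinct minimal vertex separators.

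The second step is to observe that, for any fixed pair $u,v \in V(G)$, the family $\mathbf{S}_{uv}(G)$ of minimal $uv$-separators is a subfamily of the family of all minimal separators of $G$: every minimal $uv$-separator is, in particular, a minimal separator of the graph. Hence $|\mathbf{S}_{uv}(G)| \leq n-1$, so the number of minimal vertex separators is bounded by the linear polynomial $p_{\mathcal{G}}(n) = n-1$, which is exactly the tameness condition required by Theorem~\ref{separadorespolinomial}.

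With tameness in hand, the conclusion is immediate: Theorem~\ref{separadorespolinomial} applies verbatim, yielding a polynomial time algorithm for \pname{Vertex Separator Reconfiguration} on chordal graphs under TAR/TJ. There is no substantive obstacle here; the only point that deserves care is making explicit that the linear bound on \emph{all} minimal separators transfers to the per-pair family $\mathbf{S}_{uv}(G)$ that Theorem~\ref{separadorespolinomial} actually consumes, which follows from the containment noted above. One could alternatively cite directly the known fact that chordal graphs are tame rather than rederiving the clique-tree bound, but the clique-tree argument is short and self-contained, so I would prefer to include it.
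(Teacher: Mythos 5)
Your proposal is correct and matches the paper's (implicit) argument: the corollary is stated as an immediate consequence of Theorem~\ref{separadorespolinomial}, relying on exactly the classical fact you rederive, namely that a chordal graph on $n$ vertices has at most $n-1$ minimal separators (the intersections of adjacent cliques in a clique tree), so the class is tame. Your extra remark that the per-pair family $\mathbf{S}_{uv}(G)$ is a subfamily of all minimal separators is a sound and welcome explicit bridge that the paper leaves unstated.
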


\subsection{Non-tame classes}

Also in~\cite{Milanic}, \citeauthor{Milanic} determined the three families of graphs on at most four vertices that, when forbidden, do not yield a tame class.
Specifically, if $\mathcal{F} \in \{\{3P_1, diamond\}, \{claw, K_4, C_4, diamond\}, \{K_3, C_4\}\}$, then the class of $\mathcal{F}$-free graphs is \textit{not} tame; all other graph classes that exclude graphs on at most four vertices are tame.
In this final section of the paper, we show that not only can we solve \pname{Vertex Separator Reconfiguration} in polynomial time on  $\{3P_1, diamond\}$-free graphs and series-parallel graphs, but that reconfiguration is \textit{always} possible under TAR/TJ for these classes.
For the first, we show that, under TS, we can decide in polynomial time whether reconfiguration is possible.

\subsubsection{$\boldsymbol{\{3P_1, diamond\}}$-free}

Before dealing with the reconfiguration problem for $\{3P_1, diamond\}$-free graphs, we present a novel characterization of the class that makes the reconfiguration question almost trivial.
 
\begin{theorem}
    \label{thm:char}
	Let $G$ be a connected not complete graph with at least four vertices 
	such that $G$ is not a clique and $G \neq C_5$. 
	Then $G$ is $\{3P_1, diamond\}$-free if and only if $\diam(G) \leq 3$ and one of the following statements hold:

	\begin{itemize}
		\item[(i)] $G$ is the union of two cliques $Q_1, Q_2$ and has exactly one cut vertex; or
		\item[(ii)] $G$ is the disjoint union of two cliques $Q_1, Q_2$ and the edges between the cliques form a matching.
	\end{itemize}
\end{theorem}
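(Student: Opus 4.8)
The plan is to pass to the complement $\overline{G}$, where both forbidden subgraphs become transparent. Since $3P_1$ is an independent set of size three, $G$ is $3P_1$-free exactly when $\alpha(G) \le 2$, i.e. when $\overline{G}$ is triangle-free; and since the complement of a diamond is $K_2 \cup 2K_1$ (one edge plus two isolated vertices), $G$ is diamond-free exactly when $\overline{G}$ has no induced $K_2 \cup 2K_1$. Thus the theorem is equivalent to characterizing, under the stated exclusions, the graphs $\overline{G}$ that are simultaneously triangle-free and $(K_2 \cup 2K_1)$-free.

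For the direction $(\Leftarrow)$ I would verify directly that (i) and (ii) are $\{3P_1,\text{diamond}\}$-free. In both structures every triple of vertices contains two vertices lying in a common clique, so $\alpha(G) \le 2$ and there is no induced $3P_1$. For the diamond, recall that it is the unique four-vertex graph with exactly one non-adjacent pair; one then checks that in (i) and (ii) any four vertices induce either a clique or a graph with at least two non-adjacent pairs (in (i) because a vertex of $Q_1$ and a vertex of $Q_2$, distinct from the cut vertex, have no common neighbour other than that cut vertex; in (ii) because the between-edges form a matching, so any four vertices split as three-plus-one or two-plus-two across the two cliques leave at least two non-adjacent pairs). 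Hence neither structure contains an induced diamond.

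The substantial direction is $(\Rightarrow)$, and its core is proving that $\overline{G}$ is bipartite. I would first rule out induced odd cycles of length at least $7$: on a chordless cycle $c_1 \dots c_n$ with $n \ge 7$, the vertices $c_1, c_2, c_4, c_6$ induce exactly the edge $c_1 c_2$ together with two isolated vertices, i.e. a forbidden $K_2 \cup 2K_1$. Consequently the shortest odd cycle, if one exists, has length $5$ and is induced. I would then show that an induced $C_5$ admits no further vertex $z$: triangle-freeness forbids $z$ from having two consecutive neighbours on the cycle, so $|N(z) \cap C_5| \in \{0,1,2\}$, and in each case one exhibits an edge of the $C_5$ together with two vertices of $\{z\} \cup C_5$ inducing a $K_2 \cup 2K_1$. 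Thus $\overline{G} = C_5$, which is excluded, and therefore $\overline{G}$ is bipartite. I expect this odd-hole elimination to be the main obstacle, as it is the only genuinely global step and relies on producing explicit forbidden copies.

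To finish, fix a bipartition $(X, Y)$ of $\overline{G}$; then $X$ and $Y$ are cliques of $G$ covering $V(G)$. Using $(K_2 \cup 2K_1)$-freeness I would establish two facts: every non-isolated vertex of $\overline{G}$ has at most one non-neighbour on the opposite side (otherwise an incident edge together with two same-side non-neighbours forms the forbidden graph), and $\overline{G}$ has at most one isolated vertex (two isolated vertices plus any edge form the forbidden graph). Since the isolated vertices of $\overline{G}$ are precisely the universal vertices of $G$, two cases remain. If there is no isolated vertex, the first fact forces the edges of $G$ between $X$ and $Y$ to form a matching, giving case (ii). If there is exactly one isolated vertex $w$, the first fact forces every other vertex to be adjacent to $w$ in $G$ and the cliques $X, Y$ to meet only at $w$, which is then the unique cut vertex, giving case (i). Finally $\diam(G) \le 3$ follows, since structure (i) has diameter at most $2$ and structure (ii) at most $3$.
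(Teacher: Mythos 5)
Your proof is correct, but it takes a genuinely different route from the paper's. The paper argues directly in $G$: it splits on whether $G$ has a universal vertex (which immediately yields case (i)), and otherwise picks a non-adjacent pair $u,w$, uses $3P_1$-freeness to get $N(u)\cup N(w)\cup\{u,w\}=V(G)$, and then runs a case analysis on $|N(u)\cap N(w)|\in\{0,1,2\}$ (bounded by diamond- and $3P_1$-freeness), with the $C_5$ exception surfacing inside one subcase. You instead complement: $3P_1$-freeness becomes triangle-freeness of $\overline{G}$, diamond-freeness becomes $(K_2\cup 2K_1)$-freeness, and the heart of the argument is that $\overline{G}$ is bipartite --- induced odd cycles of length at least $7$ die on the quadruple $\{c_1,c_2,c_4,c_6\}$, and an induced $C_5$ admits no further vertex, so that case is exactly the excluded $G=C_5$ (self-complementary), which makes the exception conceptually transparent rather than an artifact of case analysis. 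After that, your two local facts (each non-isolated vertex of $\overline{G}$ has at most one opposite-side non-neighbour; at most one isolated vertex, using that $G$ is not complete) recover (ii) and (i) with far fewer cases than the paper needs. One spot is compressed in your write-up: in the isolated-vertex case, adjacency of every vertex to $w$ in $G$ is automatic ($w$ is universal), and what actually kills cross edges avoiding $w$ is your first fact applied to the vertices on the side \emph{opposite} $w$ --- each such vertex already has $w$ as an opposite-side non-neighbour in $\overline{G}$, hence can have no other, so all $G$-edges between the cliques meet $w$; the tool you cite is the right one, but it should be aimed at that side. Net comparison: your approach is shorter, more systematic, and explains the $C_5$ exclusion; the paper's stays self-contained in $G$ and produces the cliques of (i)/(ii) directly in the language of the statement.
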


\begin{proof}
    For the forward direction, suppose $G$ has a universal vertex $v$; since $G$ is not a clique, $\diam(G) = 2$, and let $\{u,v,w\}$ be an induced $P_3$ of $G$.
    Since $G$ is $3P_1$-free, any $z \in V(G)$ is either in $N(u)$ or $N(v)$ but, since $G$ is also diamond-free and $v$ is universal, $z$ is not in both, otherwise $\{u,v,w,z\}$ is a diamond; moreover, if there are two non-adjacent vertices $a,b \in N(u)$, $\{a,u,v,b\}$ is an induced diamond, so $N[u]$ and $N[w]$ are cliques.1 Thus, $G$ satisfies condition (i) with $v$ as a cut vertex.
    
    If $G$ does not have a universal vertex, let $u,w \in V(G)$ be a pair of non-adjacent vertices and note that $N(u) \cup N(w) \cup \{u,w\} = V(G)$, otherwise $G$ would not be $3P_1$-free.
    Furthermore, there are no two adjacent vertices in $N(u) \cap N(w)$, otherwise we would have a diamond, $N(u) \setminus N(w)$ is a clique, otherwise $G\left[\{w\} \cup N(u) \setminus N(w)\right]$ would contain $3P_1$ as an induced subgraph, and no vertex in  $N(u) \setminus N(w)$ is adjacent to more than one vertex in $N(w)$, otherwise we would have an induced diamond.
    We branch our analysis on the size of $N(u) \cap N(w)$.
    \begin{enumerate}
        \item If $N(u) \cap N(w) = \emptyset$, no vertex of $N(u)$ has more than one neighbor in $N(w)$, so the edge between $N(u)$ and $N(w)$ form a matching.
        
        \item Suppose that $\{v\} = N(u) \cap N(w)$. 
        If there is some $a \in N(u)$ that $v$ is not adjacent to, $v$ is not adjacent to \textit{any} vertex $b \in N(u)$, otherwise we would have $\{a, u, b, v\}$ as an induced diamond of $G$.
        In turn, $v$ is either adjacent the entirety of $N(u)$ or to no vertex of $N(u)$ (the same holds with respect to $N(w)$).
        However, since $v$ is not universal and w.l.o.g, it is not adjacent to $N(u)$.
        We have two subcases.
        \begin{enumerate}
            \item If $v$ is not adjacent to $N(w)$, each vertex in $N(w)$ must be adjacent to every vertex in $N(u)$, otherwise we would have an induced $3P_1$; however, if $\max\left\{|N(u)|, |N(w)|\right\} \geq 2$, we have an induced diamond, so $G$ must be a $C_5$, a contradiction to our hypothesis.
            \item If $v$ is adjacent to $N(w)$ we are done: $V(G)$ is partitioned into the cliques $N[u], N[w]$ and the edges between them form a matching.
        \end{enumerate}
        \item Finally, if $\{v,v'\} = N(u) \cap N(w)$, we cannot have both $v,v'$ non-adjacent to the same vertex $a \in N(u) \setminus N(w)$, otherwise $\{a,v,v'\}$ form an induced $3P_1$ of $G$, nor have both of them adjacent to $a$, otherwise $\{v, a, u, v'\}$ would form an induced diamond; as such, suppose $v$ is adjacent to every vertex in $N(u) \setminus \{v, v'\}$.
        We again have to branch in two subcases.
        \begin{itemize}
            \item If $v$ is not adjacent to the vertices in $N(w) \setminus \{v,v'\}$, $v'$ must be adjacent to them.
            Therefore, we can partition $G$ into the cliques $(N[u] \setminus v'), (N[w]  \setminus v)$ and the edges between the cliques form a matching, otherwise we would have an induced diamond.
            \item If $v$ is adjacent to $N(w) \setminus \{v, v'\}$, then at least one of $N(w) \setminus \{v, v'\}$, $N(u) \setminus \{v, v'\}$ is empty, otherwise we have either a copy of $3P_1$ that includes $v'$ or an induced diamond $\{u,v,a,b\}$, where $a \in N(u)$ and $b \in N(w)$.
        \end{itemize}
    \end{enumerate}
    
    It is straightforward to verify that $\diam(G) \leq 3$.
    The converse follows directly and we omit it for brevity.
\end{proof}

The proof of the following theorem relies on the fact that the only minimal separators are either the universal cut vertex or we must choose, for each edge between the cliques, exactly one of its endpoints.
In the first case, the answer for \pname{Vertex Separator Reconfiguration} under TAR/TJ is always \YES, while under TS it suffices to check whether the number of tokens on each clique is equal in the initial and final separators; for an example of this case we refer to Figure~\ref{fig:3p1_diamond_case_1}.
For the latter, the case under TAR/TJ boils down to the same analysis.
For TS however, things require a bit more of work: if $u,v$ are the vertices we want to separate and every edge between $Q_1$ and $Q_2$ has an endpoint in $\{u,v\}$, then the analysis is also equivalent to the previous case; if, on the other hand, there is at least one edge that has neither $u$ nor $v$ as an endpoint, we can freely move tokens between $Q_1$ and $Q_2$ through it.
For an example of each of theses cases, we point to Figure~\ref{fig:3p1_diamond_case_2}.

\begin{theorem}
    Under TAR/TJ and TS, \pname{Vertex Separator Reconfiguration} can be solved in polynomial time for $\{3P_1, diamond\}$-free graphs. Furthermore, under TAR/TJ, it is always possible to reconfigure one separator into another.
\end{theorem}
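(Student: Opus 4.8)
The plan is to marry the structural dichotomy of Theorem~\ref{thm:char} with an explicit description of the $uv$-separators of each structural type. After disposing of the degenerate cases — if $G$ is a clique there is no non-adjacent pair to separate, and $G=C_5$ is a single finite instance checked directly — Theorem~\ref{thm:char} leaves two shapes: (i) two cliques $Q_1,Q_2$ glued at a lone cut vertex $c$, and (ii) two disjoint cliques $Q_1,Q_2$ joined by a matching $M$. In both, the $uv$-separators are easy to pin down. In (i), every $u$--$v$ path runs through $c$ and both $u,v$ are adjacent to $c$, so $S\subseteq V(G)\setminus\{u,v\}$ is a $uv$-separator if and only if $c\in S$. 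In (ii), since each $Q_j$ is complete and a crossing can only occur along $M$, a set $S$ avoiding $\{u,v\}$ is a $uv$-separator if and only if $S$ is a vertex cover of $M$; in particular, the $M$-partners of $u$ and $v$, when they exist, are forced into $S$. This is exactly what makes the minimal separators either the cut vertex alone or one chosen endpoint per matching edge, as announced before the statement.

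For TAR/TJ I would prove that reconfiguration is unconditionally possible by showing there is always enough room to reach a canonical separator. In type (i) the cut vertex sits in every separator, so I freeze the token on $c$ and treat the remaining $k-1$ tokens as unconstrained: every placement of them on $V(G)\setminus\{u,v,c\}$ is again a separator, so jumping them one at a time takes $S_a$ to $S_b$. In type (ii) I use that exchanging the two endpoints of a single matching edge preserves the cover property — a token may jump from $a_i$ to $b_i$ (or back) without ever uncovering $M$ — while any token that is not the sole cover of its edge may be jumped freely. Swapping edges one at a time and relocating the surplus $k-|M|$ tokens to a fixed parking region then drives any size-$k$ separator to a canonical cover, which yields connectivity and hence an unconditional \YES\ under TJ and, by Theorem~\ref{thm:tj_tar_eq}, under TAR.

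The delicate direction is TS, where adjacency and occupancy constraints interact with the cover condition. The key observation is that a token changes clique only by sliding along a matching edge into an empty endpoint while keeping $S$ a separator. In type (i) the unique bridge is $c$, which is permanently occupied, so no token ever changes clique; the number of tokens in each clique is therefore invariant, and since each $Q_j$ is complete the tokens of a clique can be freely permuted among its allowed vertices. Hence $TS(G,S_a,S_b)=\YES$ exactly when $S_a$ and $S_b$ place the same number of tokens in each clique. In type (ii) I split on whether some matching edge $a_ib_i$ avoids $\{u,v\}$. If none does, then $|M|\le 2$, the partners of $u$ and $v$ (when present) are frozen sole-covers, and every crossing edge is blocked by $u$ or $v$; the per-clique counts are again invariant and decide the instance just as in (i). If such an edge exists, it serves as a channel through which tokens can be shuttled between the cliques while the channel itself stays covered, and a maneuvering argument shows that all size-$k$ separators become mutually reachable, so the answer is an unconditional \YES.

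I expect the channel subcase of (ii) under TS to be the main obstacle: I must certify that a single free matching edge connects the whole family of size-$k$ separators. This requires scheduling the slides so that (a) every matching edge stays covered at each step — concretely, I first cover an edge through its other endpoint before vacating the one I want to move — and (b) an empty landing vertex is always within reach, which I argue from $|S|\le |V(G)|-2$ since $u,v$ are forbidden. Checking these scheduling details, together with the within-clique permutation claim and the finite verification for $C_5$, is where the real work lies; the individual moves are routine, but their orchestration needs care. Finally, recognizing the structure (locating $c$ or extracting $M$) and evaluating the clique-count or channel criteria are all polynomial-time, which gives the stated running-time bound for both rule sets.
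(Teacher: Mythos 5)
Your proposal is correct and follows essentially the same route as the paper, which in fact only sketches this proof in the paragraph preceding the theorem: both use the characterization of Theorem~\ref{thm:char} to pin down the separators (the cut vertex, or a vertex cover of the matching), conclude that TAR/TJ is always positive, and decide TS by comparing per-clique token counts, with the same case split on whether some edge between $Q_1$ and $Q_2$ avoids $\{u,v\}$ and can serve as a sliding channel. Your write-up is actually more detailed than the paper's sketch (e.g., the explicit cover-before-vacate scheduling for the channel case), and the scheduling care you flag as remaining work is precisely what the paper leaves implicit.
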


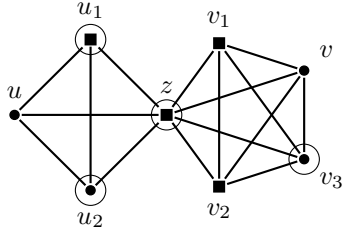
\begin{figure}[!htb]
    \centering
    \begin{tikzpicture}[scale=1]
            \GraphInit[unit=3,vstyle=Normal]
            \SetVertexNormal[Shape=circle, FillColor=black, MinSize=2pt]
            \tikzset{VertexStyle/.append style = {inner sep = \inners, outer sep = \outers}}
            \SetVertexLabelOut
            \Vertex[x=-2, y = 1, Lpos=90, Math, L = {u_1}, Ldist=2pt]{u1}
            \Vertex[x=-2, y = -1, Lpos=270, Math, L = {u_2}, Ldist=2pt]{u2}
            \Vertex[x=-3, y = 0, Lpos=90, Math, L = {u}]{u}
            \Edges(u1,u2)
            
            \Vertex[x=-1, y = 0, Lpos=90, Math, L = {z}, Ldist=2pt]{z}
            
            \Vertex[a=108, d = 1, Lpos=90, Math, L = {v_1}]{v1}
            \Vertex[a=36, d = 1, Lpos=36, Math, L = {v}]{v}
            \Vertex[a=324, d = 1, Lpos=324, Math, L = {v_3}]{v3}
            \Vertex[a=252, d = 1, Lpos=270, Math, L = {v_2}]{v2}
            \begin{scope}
                \tikzset{VertexStyle/.append style = {shape = rectangle, inner sep = 2pt}}
                \Vertex[NoLabel, Node]{v1}
                \Vertex[NoLabel, Node]{v2}
                \Vertex[NoLabel, Node]{u1}
                \Vertex[NoLabel, Node]{z}
                \draw (-2, 1) circle (0.2cm);
                \draw (-2, -1) circle (0.2cm);
                \draw (-1, 0) circle (0.2cm);
                \draw (324:1) circle (0.2cm);
            \end{scope}
            
            \Edges(u, u1, z, u2, u, z, v1, v2, v3, v, z, v2, v, v1, v3, z)
    \end{tikzpicture}
    \caption{Under TAR/TJ, we can easily reconfigure the $uv$-separator $S_a = \{u_1, z, v_1, v_2\}$ into $S_b = \{u_1, u_2, z, v_3\}$, but not under TS since we cannot slide any token from the left clique to right without connecting $u$ and $v$.\label{fig:3p1_diamond_case_1}}
\end{figure}

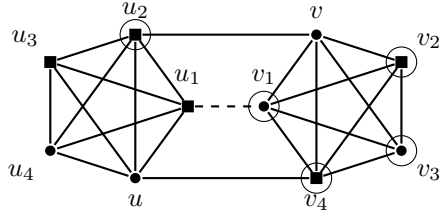
\begin{figure}[!htb]
    \centering
    \begin{tikzpicture}[scale=1]
            \GraphInit[unit=3,vstyle=Normal]
            \SetVertexNormal[Shape=circle, FillColor=black, MinSize=2pt]
            \tikzset{VertexStyle/.append style = {inner sep = \inners, outer sep = \outers}}
            \SetVertexLabelOut
            
            \begin{scope}[xshift=-3cm, rotate=180]
                \Vertex[a=180, d = 1, Lpos=90, Math, L = {u_1}, Ldist=2pt]{u1}
                \Vertex[a=108, d = 1, Lpos=270, Math, L = {u}]{u}
                \Vertex[a=36, d = 1, Lpos=216, Math, L = {u_4}]{u4}
                \Vertex[a=324, d = 1, Lpos=108, Math, L = {u_3}]{u3}
                \Vertex[a=252, d = 1, Lpos=90, Math, L = {u_2}]{u2}
                \Edges(u,u1,u2,u3,u4,u,u2,u4,u1,u3,u)
                \draw (252:1) circle (0.2cm);
            \end{scope}

            \begin{scope}
                \Vertex[a=180, d = 1, Lpos=90, Math, L = {v_1}, Ldist=2pt]{v1}
                \Vertex[a=108, d = 1, Lpos=90, Math, L = {v}]{v}
                \Vertex[a=36, d = 1, Lpos=36, Math, L = {v_2}]{v2}
                \Vertex[a=324, d = 1, Lpos=324, Math, L = {v_3}]{v3}
                \Vertex[a=252, d = 1, Lpos=270, Math, L = {v_4}]{v4}
                \draw (324:1) circle (0.2cm);
                \draw (180:1) circle (0.2cm);
                \draw (252:1) circle (0.2cm);
                \draw (36:1) circle (0.2cm);
                \Edges(v,v1,v2,v3,v4,v,v2,v4,v1,v3,v)
            \end{scope}
            
            \Edges(u,v4)
            \Edges(u2,v)
            \Edge[style = {dashed}](u1)(v1)

            \begin{scope}
                \tikzset{VertexStyle/.append style = {shape = rectangle, inner sep = 2pt}}
                \Vertex[NoLabel, Node]{v4}
                \Vertex[NoLabel, Node]{u2}
                \Vertex[NoLabel, Node]{v2}
                \Vertex[NoLabel, Node]{u1}
                \Vertex[NoLabel, Node]{u3}
            \end{scope}
            
    \end{tikzpicture}
    \caption{If edge $u_1v_1 \in E(G)$, under all three rules, we can easily reconfigure the $uv$-separator $S_a = \{u_1, u_2, u_3, v_2, v_4\}$ into $S_b = \{u_2, v_1, v_2, v_3, v_4\}$; specifically, under TS, we can use edge $u_1v_1$ as passageway for the tokens on the left clique. If $u_1v_1 \notin E(G)$, we cannot reconfigure $S_a$ into $S_b$, since there is no way to move tokens from the left clique to the right.\label{fig:3p1_diamond_case_2}}
\end{figure}

\subsubsection{Series Parallel}

The goal of this section is to show that, under TJ, it is always possible to reconfigure two $s,t$ separators in polynomial time.
\textit{Series-parallel} graphs have multiple characterizations, such as being exactly the graphs of treewidth 2~\cite{2_trees}.
One that is particularly useful for us is the recursive definition of \citeauthor{series_parallel}~\cite{series_parallel}, which we adapt below.

\begin{definition}
    Let $G$ be a multigraph.
    \begin{enumerate}
        \item If $|V(G)| = 1$ and $G$ has a loop, $G$ is a series-parallel graph;
        \item If $G$ is series-parallel, the graph $H$ obtained by subdividing an edge of $G$ is series-parallel.
        \item If $G$ is series-parallel, the graph $H$ obtained by removing one edge $e = uv$ of $G$ and replacing by two edges $f = g = e$ is series parallel.
    \end{enumerate}
    Operation 2 is known as the \textit{series} operation (or S), and operation 3 as the \textit{parallel} operation (or P).
    $G$ is series-parallel if and only if its 2-connected components can be obtained from a loop by repeatedly applying operations S and P.
    Observation~\ref{obs:sp_nontame} follows directly from this recursive definition.
\end{definition}

\begin{observation}
    \label{obs:sp_nontame}
    The class of series-parallel graphs is not tame.
\end{observation}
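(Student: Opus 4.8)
The plan is to exhibit an explicit family $\{G_k\}_{k \geq 1}$ of series-parallel graphs whose number of minimal vertex separators grows exponentially in $|V(G_k)|$, directly contradicting the polynomial bound demanded by tameness. I would take $G_k$ to consist of two distinguished vertices $s$ and $t$ joined by $k$ internally disjoint paths of length three: for each $j \in [k]$ we add two internal vertices $c_j, d_j$ together with the edges $sc_j$, $c_jd_j$, and $d_jt$. Thus $G_k$ has exactly $2k+2$ vertices.

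First I would verify that each $G_k$ is series-parallel by appealing directly to the recursive definition preceding the observation. Starting from a single vertex with a loop, one S-operation produces a pair of parallel edges between two vertices, which we name $s$ and $t$; repeated P-operations then yield any desired number $k$ of parallel edges between $s$ and $t$; finally, subdividing each of these $k$ edges twice (two further S-operations per edge) turns every edge into a path $s, c_j, d_j, t$, producing exactly $G_k$. Since only S- and P-operations are used, $G_k$ is series-parallel.

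Next I would count the minimal $st$-separators of $G_k$. As $s$ and $t$ are non-adjacent and the $k$ paths are internally disjoint and exhaust the edge set, a set $S \subseteq V(G_k) \setminus \{s,t\}$ is an $st$-separator if and only if it meets $\{c_j, d_j\}$ for every $j \in [k]$: deleting either internal vertex of a path destroys that path, while $s$ and $t$ remain connected as long as some path survives intact. Consequently, a minimal $st$-separator picks \emph{exactly one} vertex from each pair $\{c_j, d_j\}$, and every such transversal is genuinely minimal, since removing its chosen vertex at any index $j$ restores the full path $s, c_j, d_j, t$ and reconnects $s$ to $t$. Hence $G_k$ has exactly $2^k$ minimal $st$-separators, each of which is a minimal vertex separator of $G_k$.

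Finally, with $n = |V(G_k)| = 2k+2$, the number of minimal vertex separators is at least $2^k = 2^{(n-2)/2}$, which exceeds any fixed polynomial in $n$ for large $k$. Therefore no polynomial $p$ can bound the number of minimal vertex separators over the whole class, so series-parallel graphs are not tame. I do not anticipate a genuine obstacle here; the only points requiring care are the bookkeeping for the S/P construction (in particular, that the multigraph operations legitimately deliver the simple graph $G_k$) and the verification that the transversals are precisely the inclusion-minimal $st$-separators rather than merely $st$-separators.
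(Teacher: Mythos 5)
Your proof is correct and is essentially the argument the paper leaves implicit: the paper gives no explicit proof (it merely asserts that the observation ``follows directly from this recursive definition''), and your family $G_k$ of $k$ internally disjoint $s$--$t$ paths of length three, built by one S-operation on the loop, $k-1$ P-operations, and two subdivisions per edge, is exactly the kind of witness that assertion presupposes. Your counting is also sound: the minimal $st$-separators of $G_k$ are precisely the $2^k$ transversals taking one vertex from each pair $\{c_j, d_j\}$, which on $n = 2k+2$ vertices exceeds any polynomial bound, so the class is not tame.
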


Note that, if we are concerned with 2-connected graphs, the first operation that acts on the loop must be a series operation, otherwise $G$ would have a cut vertex; since this is the case, we may, equivalently, assume that, instead of a loop, our initial graph is an edge and the first operation must be of type $P$.
Let $\angled{G_1 , G_2, \dots, G_k}$ be a sequence of graphs such that $G_1$ is an edge, $G_k = G$ and for all $i \in [k-1]$, there is some edge of $G_i$ upon which we apply either operation $S$ or $P$ to obtain $G_{i+1}$, and let $\theta = \angled{(\phi_1, e_1), (\phi_2, e_2), \dots, (\phi_{k-1}, e_{k-1})}$ be the corresponding sequence of $S,P$ operations; i.e. $\phi_i \in \{S,P\}$ and $e_i \in E(G_i)$ but $e_i \notin E(G_j)$ for any $j > i$.
By our definition, $(\phi_1, e_1) = (P, e_1)$.

We say that edge $e \in G_j$ is a \textit{descendant} of edge $e_i \notin \bigcup_{\ell \geq j} E(G_\ell)$ if $e_j$ is the result of $\phi_i$ applied to $e_i$ or if it is the result of $\phi_r$, $r < j$ applied to some descendant $e_r$ of $e_i$.
Two edges $e, f$ are \textit{unrelated} if $e$ is not a descendant of $f$ and vice-versa.
In particular, every edge of $G$ is a descendant of the original loop and, if $e$ is a descendant of $e_j$ and of $e_\ell$, either $e_j$ is a descendant of $e_\ell$ or vice-versa.
This relationship can be seen as a full binary rooted tree $\mathcal{T}$, where each internal vertex $t_i$ corresponds to some $(\phi_i, e_i)$, the leaves are precisely the edges of $G$, and $t \in V(\mathcal{T})$ is a descendant of $t_i$ in the tree if and only if the edge associated to $t$ is a descendant of $e_i$.
Moreover, the root of the tree is given by $(\phi_1, e_1)$ with $\phi_1 = P$.
We call $\mathcal{T}$ a \textit{PS-tree} of $G$ and present an example of one in Figure~\ref{fig:ps_tree}.

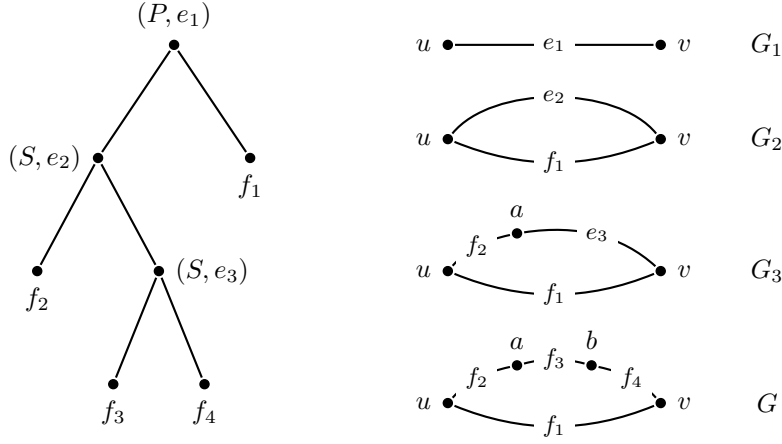
\begin{figure}[!htb]
    \centering
    \begin{tikzpicture}[scale=1]
            \GraphInit[unit=3,vstyle=Normal]
            \SetVertexNormal[Shape=circle, FillColor=black, MinSize=2pt]
            \tikzset{VertexStyle/.append style = {inner sep = \inners, outer sep = \outers, font=\small}}
            \tikzset{EdgeStyle/.append style = {font = \small}}
            \SetVertexLabelOut
            
            \begin{scope}[xshift=-2cm]
                \Vertex[x=0, y=0, Lpos=90, Math, L={(P, e_1)}]{e1}
                \Vertex[x=1, y=-1.5, Lpos=270, Math, L={f_1}]{f1}
                \Vertex[x=-1, y=-1.5, Lpos=180, Math, L={(S, e_2)}]{e2}
                \Vertex[x=-1.8, y=-3, Lpos=270, Math, L={f_2}]{f2}
                \Vertex[x=-0.2, y=-3, Lpos=0, Math, L={(S, e_3)}]{e3}
                \Vertex[x=-0.8, y=-4.5, Lpos=270, Math, L={f_3}]{f3}
                \Vertex[x=0.4, y=-4.5, Lpos=270, Math, L={f_4}]{f4}
                \Edges(f1, e1, e2, f2)
                \Edges(e2, e3, f3)
                \Edges(f4, e3)
            \end{scope}
            
            \begin{scope}[xshift=3cm,xscale=0.7, yscale=0.5]
                \begin{scope}
                    \node at (4,0) {$G_1$};
                    \Vertex[x=-2, y = 0, Lpos=180, Math, L={u}]{u1}
                    \Vertex[x=2, y = 0, Lpos=0, Math, L={v}]{v1}
                    \Edge[label={$e_1$}](u1)(v1)
                \end{scope}
                
                \begin{scope}[yshift=-2.5cm]
                    \node at (4,0) {$G_2$};
                    \Vertex[x=-2, y = 0, Lpos=180, Math, L={u}]{u2}
                    \Vertex[x=2, y = 0, Lpos=0, Math, L={v}]{v2}
                    \Edge[style={bend left=60}, label={$e_2$}](u2)(v2)
                    \Edge[style={bend right}, label={$f_1$}](u2)(v2)
                \end{scope}

                \begin{scope}[yshift=-6cm]
                    \node at (4,0) {$G_3$};
                    \Vertex[x=-2, y = 0, Lpos=180, Math, L={u}]{u3}
                    \Vertex[x=2, y = 0, Lpos=0, Math, L={v}]{v3}
                    \Vertex[x=-0.7, y=1, Lpos=90, Math, L={a}]{a3}
                    
                    \Edge[style={bend left=20}, label={$f_2$}](u3)(a3)
                    \Edge[style={bend left=32}, label={$e_3$}](a3)(v3)
                    
                    \Edge[style={bend right}, label={$f_1$}](u3)(v3)
                \end{scope}
                
                \begin{scope}[yshift=-9.5cm]
                    \node at (4,0) {$G$};
                    \Vertex[x=-2, y = 0, Lpos=180, Math, L={u}]{u4}
                    \Vertex[x=2, y = 0, Lpos=0, Math, L={v}]{v4}
                    \Vertex[x=-0.7, y=1, Lpos=90, Math, L={a}]{a4}
                    \Vertex[x=0.7, y=1, Lpos=90, Math, L={b}]{b4}
                    
                    \Edge[style={bend left=20}, label={$f_2$}](u4)(a4)
                    \Edge[style={bend left=20}, label={$f_3$}](a4)(b4)
                    \Edge[style={bend left=20}, label={$f_4$}](b4)(v4)
                    
                    \Edge[style={bend right}, label={$f_1$}](u4)(v4)
                \end{scope}
            \end{scope}

    \end{tikzpicture}
    \caption{A PS-tree and the corresponding sequence of series-parallel multigraphs. \label{fig:ps_tree}}
\end{figure}

We say that a vertex $v$ is a \textit{piece} of edge $e_i$ if $v \notin V(G_i)$, $(S, e_i)$ is in $\theta$ and $v$ is an endpoint of one of the descendants of $e_i$; the unique edge $e_i$ that was subdivided to create $v \notin V(G_1)$ is the \textit{support} of $v$, which we denote by $\supp(v) = e_i$.
The \textit{span} $\espan(v) = \angled{f_1, \dots, f_p}$ of $v$ is the maximum sequence of edges of $\bigcup_{i \in [k]} E(G_k)$ that $v$ is a piece of, such that $f_p = \supp(v)$ and $f_j$ is a descendant of $f_i$ if $i < j$.
Note that, for each $f_i \in \espan(v)$, removing the endpoints of $f_i$ from $G$ separates $v$ from all vertices that are not pieces of $f_i$; moreover, by  definition of piece, one vertex cannot be a piece of an edge that was the target of a parallel operation and, consequently, $f_i \cap f_{i+1}$ has precisely one vertex.
This implies that if a separator of $G$ contains the endpoints of $f_i$ but does not contain one endpoint of $f_{i+1}$, we can reconfigure it to contain the latter with one jump operation.
Two vertices $u,v \in V(G)$ are a \textit{parallel pair} if the support $e$ of $u$ and the support $f$ of $v$ are unrelated, and the lowest common ancestor of  $e,f$ in $\mathcal{T}$ is $(P, e_i = ab)$; in this case, we say that $u,v$ are \textit{parallel with respect to} $a,b$.
Now, suppose that the lowest common ancestor of $e = \supp(u)$ and $f = \supp(v)$ is an $S$ node and $e,f$ are unrelated. If there is some $G_i$ where $uv \in E(G_i)$ but $uv \notin E(G)$, $u,v$ are a \textit{sequential pair}; if no $G_i$ has $uv \in E(G_i)$, $u,v$ form a \textit{serial pair}; the third case, that has $uv \in E(G)$ is not relevant to us since we are concerned with vertex separators and we cannot separate adjacent vertices.
Vertices $u,v$ form a \textit{nested pair} if no $G_i$ has $uv \in E(G_i)$ and $\supp(v)$ is a descendant of an edge that has $u$ as an endpoint. 
We denote by $\varepsilon(u,v)$ the number of times some edge with $u$ and $v$ as endpoints was the target of a $P$ operation in $(G_1, \dots, G_k)$, if there is no $G_i$ that contains $uv$, we define $\varepsilon(u,v) = 0$.

A multigraph is series-parallel if each of its two-connected components is series parallel.
A graph is series-parallel if it can be obtained from a series-parallel multigraph by removing all but one edge of each set of parallel edges.
Our next goal is to show that, under TJ, it is always possible to reconfigure any two $st$-separators of a series-parallel graph $G$.
We focus on vertices belonging to a same two-connected component, otherwise there is a cut vertex between them and it suffices to place one token on it in the first intermediate step.
Thus, for the remainder of this section, we assume that $G$ is a two-connected series-parallel graph.

\begin{observation}
    \label{obs:edge_sep}
    For any non-adjacent $s,t \in V(G)$ such that $s$ is a piece of edge $e_i = ab$ and $t$ is not a piece nor an endpoint of $e_i$, $\{a,b\}$ is an $st$-separator.
\end{observation}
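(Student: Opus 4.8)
The plan is to show that the endpoints $a,b$ of $e_i$ form a two-cut that isolates the pieces of $e_i$ from the rest of $G$; the statement is then immediate, since $s$ lies among the pieces while $t$ lies outside. Concretely, I will prove the structural invariant that \emph{in $G$, every edge incident to a piece of $e_i$ is a descendant of $e_i$}. Granting this, every piece of $e_i$ has all of its neighbors inside the set $\{a,b\} \cup \{w : w \text{ is a piece of } e_i\}$, so the only vertices of this set that can have a neighbor outside it are $a$ and $b$. Because $t$ is neither a piece nor an endpoint of $e_i$, it lies outside this set, and hence any path from $s$ to $t$ must cross from a piece to a non-piece, which can only happen through $a$ or $b$; deleting $\{a,b\}$ destroys every such path, so $\{a,b\}$ is an $st$-separator.

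To establish the invariant I would induct along the construction sequence $\angled{G_{i+1}, \dots, G_k}$, maintaining that in each $G_j$ every edge incident to a current piece of $e_i$ is a descendant of $e_i$. The base case is $G_{i+1}$, where $(S,e_i)$ has just subdivided $e_i = ab$ into the two edges $aw, wb$: the only piece is the new vertex $w$, and both of its incident edges are by definition descendants of $e_i$. For the inductive step, the operation producing $G_{j+1}$ acts on some edge $g$ of $G_j$, and I would split into two cases according to whether $g$ is a descendant of $e_i$.

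If $g$ is a descendant of $e_i$, then whether we apply $S$ (creating a new piece whose two incident edges subdivide $g$ and are therefore themselves descendants) or $P$ (duplicating $g$ into a parallel descendant edge, with no new vertex), every newly created edge is again a descendant of $e_i$ and every new vertex has only descendant edges, so the invariant survives. If $g$ is not a descendant of $e_i$, then by the inductive hypothesis $g$ is not incident to any piece of $e_i$; hence an $S$ operation attaches its new vertex only to the endpoints of $g$, neither of which is a piece, so the new vertex is itself not a piece and the collection of pieces and their incidences is unchanged, while a $P$ operation again alters nothing relevant. This case split is the point requiring the most care, since the only way the invariant could break is an operation producing an edge between a piece of $e_i$ and the outside; the two cases show precisely that descendant operations stay inside and non-descendant operations never touch a piece. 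Carrying the invariant through to $G_k = G$ yields the claimed two-cut and completes the proof. I note in passing that this is exactly the specialization to $e_i \in \espan(s)$ of the separation property already remarked upon when the span was introduced.
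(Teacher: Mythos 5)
Your proof is correct and coincides with the paper's own reasoning: the paper states this observation without an explicit proof, treating it as immediate from the remark made when $\espan(v)$ is introduced (removing the endpoints of $f_i$ separates $v$ from every vertex that is not a piece of $f_i$), and your induction along $\angled{G_{i+1},\dots,G_k}$ is precisely the formalization of that fact, as you yourself note in closing. One micro-detail worth tightening: in the non-descendant $S$-case, the definitional reason the new vertex is not a piece of $e_i$ is that its two incident edges are descendants of $g$ and hence unrelated to $e_i$ by the tree structure of the descendant relation, not merely that its neighbors are non-pieces; this does not affect the correctness of the invariant or the argument.
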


\begin{lemma}
    \label{lem:canon_sizes}
    Let $s,t \in V(G)$ be two nonadjacent vertices. If $\{s,t\} = V(G_1)$, the size of a minimum $st$-separator is equal to $\varepsilon(s,t) + 1$, otherwise it is equal to $\varepsilon(s,t) + 2$.
\end{lemma}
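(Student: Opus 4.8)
My plan is to compute the minimum $st$-separator through Menger's theorem: its size equals the maximum number of internally vertex-disjoint $st$-paths, so it suffices to pack that many paths and exhibit a separator of matching cardinality. I would organise everything around the \emph{bundle} of $st$-connections, namely the set of edges that equal $st$ in some $G_i$ together with all their descendants. Tracking this bundle through $\theta$ is the backbone of the argument. Starting from its unique birth (the edge $e_1$ when $\{s,t\}=V(G_1)$, or the subdivision that first creates the later-born of $s,t$ otherwise), every $P$ operation counted by $\varepsilon(s,t)$ increases the number of parallel $st$-edges by one, whereas each surviving $st$-edge must eventually be subdivided because $s,t$ are non-adjacent in $G$. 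A one-line accounting (births $=1+\varepsilon(s,t)$, deaths $=$ number of subdivided $st$-edges, final count $0$) then shows the bundle splits into exactly $\varepsilon(s,t)+1$ \emph{branches}, each a subdivided $st$-edge whose first subdivision vertex $w_i$ is a cut vertex of that branch.

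The terminal case $\{s,t\}=V(G_1)$ is the clean one. Here the root of the PS-tree is $(P,e_1=st)$, so $G$ is the parallel composition of its $\varepsilon(s,t)+1$ branches with poles $s,t$; the branches are pairwise internally disjoint and each contributes exactly one $st$-path, since its cut vertex $w_i$ is a single-vertex separator inside the branch. This simultaneously yields $\varepsilon(s,t)+1$ internally disjoint paths and the separator $\{w_1,\dots,w_{\varepsilon(s,t)+1}\}$, so the minimum is exactly $\varepsilon(s,t)+1$.

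For $\{s,t\}\neq V(G_1)$ I would show that exactly one extra disjoint path appears. Suppose first that the edge $st$ does occur, and assume without loss of generality that $t$ is born after $s$; then $\supp(t)=sw$ for some $w$, and the incidences of $t$ split into the bundle side (towards $s$) and the development of the edge $tw$ (towards $w$). Since that development has poles $t,w$, every $st$-path avoiding the bundle must leave $t$ through $w$, so $w$ is a gateway shared by all such ``around'' paths; $2$-connectivity provides one, giving $\varepsilon(s,t)+2$ disjoint paths in total, while $\{w_1,\dots,w_{\varepsilon(s,t)+1},w\}$ is a separator of the same size. When the edge $st$ never appears we have $\varepsilon(s,t)=0$, the bundle is empty, and the claim reduces to a minimum separator of size $2$: the lower bound is $2$-connectivity, and the upper bound is a two-vertex separator supplied by the span machinery, e.g.\ an $f_i\in\espan(t)$ for which $s$ is neither a piece nor an endpoint, or the poles of the lowest common ancestor in the parallel-pair case, exactly as in Observation~\ref{obs:edge_sep}.

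The main obstacle is precisely the uniqueness of the extra path in the non-terminal case: I must rule out $s$ and $t$ being joined, outside the bundle, by two or more internally disjoint paths. The key structural fact is that the $st$-bundle is a \emph{non-root} $P$-node of the PS-tree and therefore sits in series with the remainder of $G$ at the poles $s,t$, so all non-bundle connections of $t$ funnel through the single support vertex $w$. Making this funnelling rigorous via $\espan(t)$ and the descendant relation is the technical heart of the proof; once it is in place, the path packing and the explicit separators follow immediately from the branch accounting above together with Menger's theorem.
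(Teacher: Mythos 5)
Your proposal is correct in substance but takes a genuinely different route from the paper's. The paper proceeds by a taxonomy of pair types --- serial, nested, sequential, and parallel pairs, plus the cases with $s \in V(G_1)$ --- exhibiting in each case an explicit canonical separator $M(s,t)$ (e.g.\ $\{a,z\}$, or $V(s,t)\cup\{a\}$, the latter proved minimum by induction on $|V(s,t)|$ with Menger invoked only in the base case); that taxonomy is not wasted effort, since the same canonical separators and case split drive the subsequent reconfiguration lemmas (Lemmas~\ref{lem:parallel_pair}--\ref{lem:root_reconf}). You instead split on when, if ever, an $st$-edge exists, replace the paper's induction by a birth/death ledger ($1+2\varepsilon(s,t)$ $st$-edges born, one killed per operation, none surviving, hence exactly $\varepsilon(s,t)+1$ subdivided $st$-edges), and certify optimality symmetrically via Menger with an explicit packing of internally disjoint paths --- one per bundle branch through its cut vertex $w_i$, plus one funnelled through the gateway $w$ with $\supp(t)=sw$ in the non-terminal case; your separators coincide with the paper's, as $\{w_1,\dots,w_{\varepsilon(s,t)+1}\}=V(s,t)$ and $w$ plays the role of the paper's $a$. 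Your organization collapses the paper's serial, nested, parallel, and no-$st$-edge root cases into a single case and makes the lower bound explicit where the paper mostly argues that every path must contain certain vertices. Two details to nail down: first, as literally stated, an $f_i\in\espan(t)$ for which $s$ is neither a piece nor an endpoint need not exist --- in a nested pair where $s$ is born inside the development of $\supp(t)$, $s$ is a piece of \emph{every} edge of $\espan(t)$; the fix is to take the support of the later-born of the two vertices, whose endpoints always work in the no-$st$-edge case, since the earlier-born vertex can be neither a piece (it predates the development) nor an endpoint (that would create an $st$-edge). Second, the claim that ``$2$-connectivity provides one'' extra path needs the short argument that the intermediate multigraph at $t$'s birth stays connected after deleting $t$ (equivalently, deleting the edge $sw$), and that lifting this $s$--$w$ walk through the edge developments keeps it disjoint from the bundle; you correctly flag this funnelling step as the technical heart, and it does go through.
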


\begin{proof}
    Let $V(G_1) = \{u,v\}$.
    We begin our analysis on the relationship between $s,t$, first supposing $\{s,t\} \cap \{u,v\} = \emptyset$.
    \begin{itemize}
        \item[Serial Pair] Let $(S, e_i) \in V(\mathcal{T})$ be the lowest common ancestor of the supports of $s$ and $t$, and $(\phi_\ell, e_\ell), (\phi_r, e_r)$ its children, such that $\supp(t)$ is $e_\ell$ or a descendant of it, $\supp(s)$ is $e_r$ or a descendant of it, $e_i = ab$, $e_\ell = az$, and $e_r = zb$.
        To see that $\{a,z\}$ is a minimum $st$-separator, note that there can be no piece of $e_\ell$ adjacent to a piece of $e_r$, so any $s-t$ path that uses a piece of $e_r$ must include either $z$ and any $s-t$ path that does not pass through $z$ necessarily contains $a$.
        Since $\varepsilon(s,t) = 0$, the statement holds.
        An example of a serial pair $s,t$ is shown in Figure~\ref{fig:serial_pair}.
        
        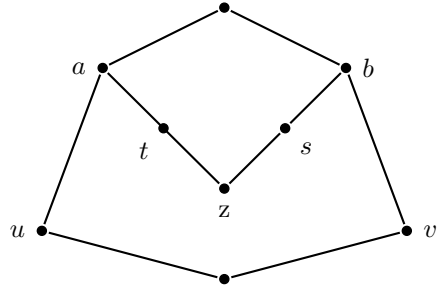
\begin{figure}[!htb]
            \centering
            \begin{tikzpicture}[scale=0.8]
                    \GraphInit[unit=3,vstyle=Normal]
                    \SetVertexNormal[Shape=circle, FillColor=black, MinSize=2pt]
                    \tikzset{VertexStyle/.append style = {inner sep = \inners, outer sep = \outers}}
                    \SetVertexLabelOut
                    \Vertex[x=-3, y = 0.3, Lpos=180, Math]{u}
                    \Vertex[x=3, y = 0.3, Lpos=0, Math]{v}
                    
                    \Vertex[x=-2, y = 3, Lpos=180, Math]{a}
                    \Vertex[x=2, y = 3, Lpos=0, Math]{b}
                    
                    \Vertex[x=1, y = 2, Math, Lpos= 315]{s}

                    \Vertex[x=0, y = 4, NoLabel]{a_1}
                    \Vertex[x=0, y = 1, Lpos=270]{z}
                    \Vertex[x=0, y = -0.5, NoLabel]{a_2}
                    
                    \Edge(a)(z)
                    
                    \Vertex[x=-1, y = 2, Math, Lpos=225]{t}
                    
                    \Edges(z, s, b, v, a_2, u, a, a_1, b)

            \end{tikzpicture}
            \caption{A serial pair $s,t$ with $e_i = ab$, $\supp(z) = e_i$ and $M(s,t) = \{a,z\}$. \label{fig:serial_pair}}
        \end{figure}

        \item[Nested Pair] Let $(S, e_i)$ be the node of $\mathcal{T}$ where, without loss of generality, $e_i = as$, $(\phi_\ell, e_\ell), (\phi_r, e_r)$ its children, such that $\supp(t)$ is $e_\ell = az$ or a descendant of $e_\ell$, and $e_r = zs$.
        By Observation~\ref{obs:edge_sep}, $\{a,z\}$ is an $st$-separator and is minimum because $G$ is 2-connected.
        Since $\varepsilon(s,t) = 0$, the statement holds.
        Figure~\ref{fig:nested_pair} gives an example of a nested pair $s,t$.

        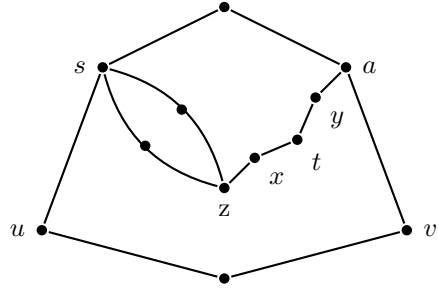
\begin{figure}[!htb]
            \centering
            \begin{tikzpicture}[scale=0.8]
                    \GraphInit[unit=3,vstyle=Normal]
                    \SetVertexNormal[Shape=circle, FillColor=black, MinSize=2pt]
                    \tikzset{VertexStyle/.append style = {inner sep = \inners, outer sep = \outers}}
                    \SetVertexLabelOut
                    \Vertex[x=-3, y = 0.3, Lpos=180, Math]{u}
                    \Vertex[x=3, y = 0.3, Lpos=0, Math]{v}
                    
                    \Vertex[x=-2, y = 3, Lpos=180, Math]{s}
                    \Vertex[x=2, y = 3, Lpos=0, Math]{a}
                    
                    \Vertex[x=1.2, y = 1.8, Lpos=315, Math]{t}
                    \Vertex[x=0.5, y = 1.5, Lpos=315, Math]{x}
                    \Vertex[x=1.5, y = 2.5, Lpos=315, Math]{y}

                    \Vertex[x=0, y = 4, NoLabel]{a_1}
                    \Vertex[x=0, y = 1, Lpos=270]{z}
                    \Vertex[x=0, y = -0.5, NoLabel]{a_2}
                    
                    \Edge[style={bend right}](s)(z)
                    \Edge[style={bend left}](s)(z)
                    
                    \Vertex[x=-0.7, y = 2.3, NoLabel]{z_2}
                    \Vertex[x=-1.3, y = 1.7, NoLabel]{z_3}
                    
                    \Edges(z, x, t, y, a, v, a_2, u, s, a_1, a)
                    
            \end{tikzpicture}
            \caption{A nested pair $s,t$ with $\supp(t)$ being an $xy$-edge and $M(s,t) = \{a,z\}$ \label{fig:nested_pair}}
        \end{figure}
        
        \item[Sequential Pair] Suppose without loss of generality that $\supp(t) = e_i = sa$, define $V(s,t)$ as the vertices $z_j$ that have an $st$-edge as support,
        and let $f_j = sz_j$ and $g_j = z_jt$ be the edges created by $(S, e_j) \in V(\mathcal{T})$.
        We show that $\{a\} \cup V(s,t)$ is an $st$-separator of minimum size by induction on $|V(s,t)|$.
        For the base case of $V(s,t) = \{z_j\}$, no piece of $f_j$ is adjacent to a piece of $g_j$, so any $s - t$ path that uses a piece of either of these edges must pass through $z_j$; similarly, any path that does not use a piece of $st$ must pass through $a$.
        By Menger's Theorem, $\{z_j, a\}$ is a minimum $st$-separator.
        For the general case, take any $z_j \in V(s,t)$, let $G'$ be the induced subgraph of $G$ obtained by the removal of all pieces of $\supp(z_j)$, and note that $G'$ is a series parallel graph with one fewer operation $P$ applied to an edge $st$.
        By induction, $V(s,t) \setminus \{z_j\} \cup \{a\}$ is an $st$-separator of $G'$.
        Since the only $s-t$ paths on $G$ not on $G'$ are those that use exclusively pieces of $\supp(z_j)$ and must pass through $z_j$, we have that $V(s,t) \cup \{a\}$ is an $st$-separator.
        Note that $|V(s,t)| = \varepsilon(s,t) + 1$ since each operation $P$ increases $\varepsilon(s,t)$ by one and the first $st$-edge is the result of an $S$ operation.
        As such, it holds that the size of a minimum $s-t$ separator is $|\{a\} \cup V(s,t)| = \varepsilon(s,t) + 2$.
        For an example of a sequential pair $s,t$ and the set $V(s,t)$, we refer to Figure~\ref{fig:sequential_pair}.
        
        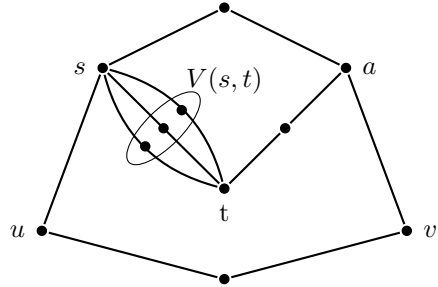
\begin{figure}[!htb]
            \centering
            \begin{tikzpicture}[scale=0.8]
                    \GraphInit[unit=3,vstyle=Normal]
                    \SetVertexNormal[Shape=circle, FillColor=black, MinSize=2pt]
                    \tikzset{VertexStyle/.append style = {inner sep = \inners, outer sep = \outers}}
                    \SetVertexLabelOut
                    \Vertex[x=-3, y = 0.3, Lpos=180, Math]{u}
                    \Vertex[x=3, y = 0.3, Lpos=0, Math]{v}
                    
                    \Vertex[x=-2, y = 3, Lpos=180, Math]{s}
                    \Vertex[x=2, y = 3, Lpos=0, Math]{a}
                    
                    \Vertex[x=1, y = 2, NoLabel]{y_2}

                    \Vertex[x=0, y = 4, NoLabel]{a_1}
                    \Vertex[x=0, y = 1, Lpos=270]{t}
                    \Vertex[x=0, y = -0.5, NoLabel]{a_2}
                    
                    \Edge[style={bend right}](s)(t)
                    \Edge[](s)(t)
                    \Edge[style={bend left}](s)(t)
                    
                    \Vertex[x=-1, y = 2, NoLabel]{z_1}
                    \Vertex[x=-0.7, y = 2.3, NoLabel]{z_2}
                    \Vertex[x=-1.3, y = 1.7, NoLabel]{z_3}
                    
                    \Edges(t, y_2, a, v, a_2, u, s, a_1, a)
                    
                    \begin{scope}[xshift=-1cm, yshift=2cm, rotate=45]
                        \draw (0,0) circle [x radius=0.8cm, y radius=0.3cm];
                    \end{scope}
                    \node at (0, 2.8) {$V(s,t)$};
                    
            \end{tikzpicture}
            \caption{A sequential pair $s,t$ and the set $V(s,t)$, with $\supp(t)$ being an $sa$-edge. \label{fig:sequential_pair}}
        \end{figure}
        
        \item[Parallel Pair] If $(P, e_i)$ is the node of $\mathcal{T}$ corresponding to the lowest common ancestor of the nodes containing $\supp(s)$ and $\supp(t)$, with $e_i = ab$, then $\{a,b\}$ is a minimum $st$-separator: any $s-t$ path must include either $a$ or $b$.
        Since $\varepsilon(s,t) = 0$, the statement holds.
        Figure~\ref{fig:parallel_pair} presents an example of an $s,t$ parallel pair which is parallel relative to $a,b$.
        
        \begin{figure}[!htb]
            \centering
            \begin{tikzpicture}[scale=0.8]
                    \GraphInit[unit=3,vstyle=Normal]
                    \SetVertexNormal[Shape=circle, FillColor=black, MinSize=2pt]
                    \tikzset{VertexStyle/.append style = {inner sep = \inners, outer sep = \outers}}
                    \SetVertexLabelOut
                    \Vertex[x=-3, y = 0, Lpos=180, Math]{a}
                    \Vertex[x=3, y = 0, Lpos=0, Math]{b}
                    
                    \Vertex[x=-2, y = 3, Lpos=180, Math]{x_1}
                    \Vertex[x=2, y = 3, Lpos=0, Math]{y_1}
                    
                    \Vertex[x=-1, y = 1.5, Lpos=180, Math]{x_2}
                    \Vertex[x=1, y = 1.5, Lpos=0, Math]{y_2}

                    \Vertex[x=0, y = 4, NoLabel]{z_1}
                    \Vertex[x=0, y = 2.5, Lpos=90, Math]{t}
                    \Vertex[x=0, y = 0.5, NoLabel]{z_2}
                    \Vertex[x=0, y = -1, Lpos=270, Math]{s}
                    
                    \Edges(x_1,a,s,b,y_1,z_1,x_1,x_2,z_2,y_2,t,x_2)
                    \Edges(y_2,y_1)
                    
                    \draw (-1.7,1.3) rectangle (1.7,1.7);
                    \draw (0,0) circle [x radius=3.7cm, y radius=0.4cm];
            \end{tikzpicture}
            \caption{To reconfigure $A = \{x_2,y_2\}$ into $M(s,t) = \{a, b\}$, we need to swap $x_2$ with $x_1$, $y_2$ with $y_1$, $x_1$ with $a$, and finally $y_1$ with $b$.\label{fig:parallel_pair}}
        \end{figure}
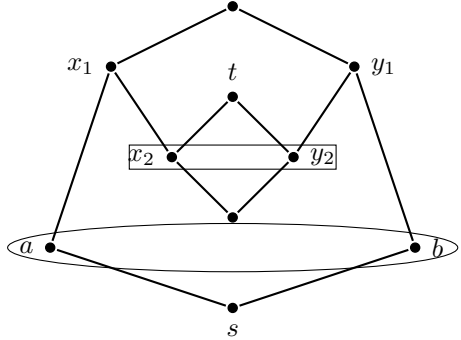
        
    \end{itemize}
    
    Now, suppose $s = u$.
    If $t = v$ or there is some $e_i = st$, a similar argument to the one used for sequential pairs holds: in the first case, $V(s,t)$ is a minimum separator of size $\varepsilon(s,t) + 1$, in the second, $V(s,t) \cup (\supp(t) \setminus \{s\})$ is a minimum separator of size $\varepsilon(s,t) + 2$.
    Otherwise, if there is no $e_i = st$, let $z$ be the unique vertex such that $\supp(z)$ is an $sa$-edge and $\supp(t)$ is either $za$ or a descendant of $za$.
    In this case, $\{z,a\}$ is a minimum $st$-separator: every $s-t$ path passes either through $z$ or $a$.
\end{proof}

We refer to the separators described in the proof of our previous theorem as the \textit{canonical $st$-separators}, and denote them by $M(s,t)$.

From now on, our strategy for reconfiguring an $s-t$ separator $A$ into another $s-t$ separator $B$ will consist in reconfiguring them into some intermediate separator containing $M(s,t)$. 
Note that this may be far from trivial; for instance, when $s,t \in V(G)$ form a parallel pair, it may require quite a bit of work to reconfigure a separator to contain $M(s,t)$, as it may not be possible to directly exchange a token in $A$ for a token in $M(s,t)$, as shown by the example in Figure~\ref{fig:parallel_pair}.

\begin{lemma}
    \label{lem:parallel_pathway}
    For any $s,t \in V(G)$ and minimal $st$-separator $A$, if there is some $w \in A$ such that $w,t$ form a parallel pair with respect to $x,y$, $x,y \notin A$, and $s$ is not a piece of the same $xy$-edge as $w$, then, on $G \setminus A$, $x$ is reachable from $t$, $y$ is not, and, on $G \setminus (A \setminus \{w\})$, every $t-s$ path contains $x$, $w$, and $y$, in this order.
\end{lemma}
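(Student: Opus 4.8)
The plan is to make the parallel gadget explicit and then let the minimality of $A$ do the work. Since $w,t$ are a parallel pair with respect to $x,y$, the lowest common ancestor of $\supp(w)$ and $\supp(t)$ in $\mathcal{T}$ is a node $(P,e_i=xy)$; I would let $e^L,e^R$ be the two children edges created by this $P$ operation, with $\supp(w)$ descending from $e^L$ and $\supp(t)$ from $e^R$, and set $C_L$ and $C_R$ to be the sets of pieces of $e^L$ and of $e^R$, so that $w\in C_L$ and $t\in C_R$. The crucial structural input is the trapping property of spans recorded above (the packaged form of which is Observation~\ref{obs:edge_sep}): since $e^L$ lies in $\espan(w)$, deleting its endpoints $\{x,y\}$ separates every piece of $e^L$ from every non-piece. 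I would extract from this exactly three facts: $C_L\cap C_R=\emptyset$, every neighbour of a vertex of $C_L\setminus\{x,y\}$ lies in $C_L\cup\{x,y\}$, and, using the hypothesis that $s$ is not a piece of the same $xy$-edge as $w$, that $s\notin C_L$.

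Next I would invoke minimality. As $A$ is a minimal $st$-separator, $A\setminus\{w\}$ is not a separator, so $G\setminus(A\setminus\{w\})$ contains a simple $t$--$s$ path $Q$, and $Q$ must traverse $w$ because $A$ itself separates $s$ from $t$. I would then isolate the maximal run of consecutive vertices of $Q$ lying in $C_L\setminus\{x,y\}$ that contains $w$: its two flanking vertices are neighbours of interior vertices of $C_L$ yet are not themselves interior, so by the first step each is $x$ or $y$, and they are distinct because $Q$ is simple. Removing $w$, the portion of $Q$ running from $t$ to the flank $p_1$ preceding the run avoids all of $A$, so $p_1$ lies in the component $C_t$ of $t$ in $G\setminus A$; symmetrically the portion after the run places the opposite flank $p_2$ in the component $C_s$ of $s$. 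Because $C_t\neq C_s$ we get $p_1\neq p_2$, and naming the poles so that $x:=p_1$ and $y:=p_2$ yields $x\in C_t$ and $y\in C_s$; that is, $x$ is reachable from $t$ in $G\setminus A$ and $y$ is not, which is the first half of the statement.

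For the ordering claim I would take an \emph{arbitrary} $t$--$s$ path $R$ in $G\setminus(A\setminus\{w\})$; again it must pass through $w$ since $s$ and $t$ are separated in $G\setminus A$. Running the same run argument on $R$, the maximal $C_L$-interior run of $R$ containing $w$ is flanked by two poles, and I would pin them down using the first half: the prefix of $R$ up to the leading flank lies in $G\setminus A$ and joins it to $t$, so that flank lies in $C_t$ and hence equals $x$ (as $y\in C_s\neq C_t$), while symmetrically the trailing flank equals $y$. Thus $R$ meets $x$, then $w$, then $y$, in that order. I expect the only delicate point to be the first paragraph: correctly arguing that $\{x,y\}$ is the sole interface between the $w$-branch $C_L$ and the remainder of $G$ (including $t$ and $s$), which is precisely where the piece/span machinery is needed; once that interface is established, the reachability and ordering conclusions follow from the essentiality of $w$ in the minimal separator $A$ by routine path surgery.
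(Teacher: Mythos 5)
Your proposal is correct and follows essentially the same route as the paper: both arguments rest on the span/piece interface property (Observation~\ref{obs:edge_sep}) that $\{x,y\}$ is the sole boundary of the pieces of $w$'s $xy$-edge, together with minimality of $A$ forcing a $t$--$s$ path through $w$ in $G \setminus (A \setminus \{w\})$. Your explicit run-decomposition of the witness path is just a more carefully executed version of the paper's reachability case analysis (and in fact tightens its rather terse ``both reachable'' case), so no further comparison is needed.
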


\begin{proof}
    Suppose $x$ is reachable from $t$ on $G \setminus A$; if none were reachable, then there would not be any need for $w$, since any $t-w$ path contains at least one of $x,y$, contradicting the minimality of $A$.
    On the other hand, if both $x,y$ were reachable, then $A \setminus \{w\}$ would also be an $st$-separator since $s$ is not a piece of the same $xy$-edge as $w$, any $s-t$ path that passes through $w$ must pass through \emph{both} $x$ and $y$.
    As such, only $x$ is reachable from $t$.
    Now, for the second statement, there is no $x-s$ path on $G \setminus A$ but, because $A$ is minimal, every $s-t$ path on $G \setminus (A \setminus \{w\})$ must contain $w$ and, since $s$ is not a piece of the same $xy$-edge as $w$, every $s-t$ path on $G \setminus (A \setminus \{w\})$ is of the form $t-x-w-y-s$.
\end{proof}

\begin{lemma}
    \label{lem:span_separator}
    For any $s,t \in V(G)$, let $e_j = az$ be any edge such that $t$ is a piece of $e_j$ but $s$ is not a piece nor an endpoint of $e_j$, and $F = \angled{f_\ell, \dots, f_p}$ be the suffix of $\espan(t)$ where $f_\ell = e_j$.
    If $A$ is a minimal $st$-separator contained in the set of pieces of $e_j$, then $A$ can be reconfigured to a separator that contains one vertex that is an endpoint of some $f_i \in F$ and, consequently, $A$ can be reconfigured to a separator that contains $\{a,z\}$.
\end{lemma}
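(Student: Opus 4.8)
\emph{The plan is to combine a reachability analysis in $G \setminus A$ with the one-jump ``climbing'' move along the span that was observed right after the definition of $\espan$, organising the whole argument as an induction on the length of $F$.} Throughout, note that $a,z \in V(G_j)$ are endpoints of $e_j$ and hence are not pieces of $e_j$, so $a,z \notin A$; moreover, writing $E_i$ for the endpoint pair of an edge $f_i \in F$, the set $E_i$ is itself an $st$-separator. Indeed the pieces of $f_i$ are contained in the pieces of $e_j$, and $s$ is neither a piece nor an endpoint of $f_i$, so Observation~\ref{obs:edge_sep} (with the roles of $s$ and $t$ exchanged) applies.

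First I would study $C_t$, the connected component of $t$ in $G \setminus A$. Since $A$ lies among the pieces of $e_j$, every $t$--$s$ path must leave this piece-region through $a$ or $z$; as $G$ is $2$-connected, both $a$ and $z$ reach $s$ by paths disjoint from the pieces of $e_j$, hence disjoint from $A$. Thus $\{a,z\} \not\subseteq C_t$, for otherwise $s$ and $t$ would be connected in $G \setminus A$. Because $A$ is minimal we also have $A = N(C_t)$. Walking the span inward from $f_\ell = e_j$, let $f_i$ be the \emph{innermost} edge of $F$ with $E_i \not\subseteq C_t$; this exists since $E_\ell=\{a,z\}\not\subseteq C_t$. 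Then $E_{i+1}\subseteq C_t$ (with $t$ itself playing this role when $i=p$), so the vertex $q$ shared by $f_i$ and $f_{i+1}$ lies in $C_t$, while the other endpoint $p$ of $f_i$ satisfies $p\notin C_t$.

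If $p\in A$ we are already done with the first claim. Otherwise $p$ lies in a component of $G\setminus A$ other than $C_t$, and the crux is to reconfigure $A$, in a single jump, into a separator containing $p$: the pair $E_i=\{p,q\}$ is a two-vertex $st$-separator, $q$ is reachable from $t$, and the only way $A$ can avoid meeting $f_i$ is by ``wrapping around'' a parallel substructure sitting between $q$ and $p$ --- exactly the configuration governed by Lemma~\ref{lem:parallel_pathway}. Applying that lemma produces the pathway along which the appropriate token of $A$ can be exchanged for $p$ without ever connecting $s$ and $t$, yielding a separator that contains the endpoint $p$ of $f_i$ and proving the first claim.

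For the ``consequently'' part I would first upgrade ``one endpoint'' to ``both endpoints''. Having fixed $p$ in the separator, the $t$-side of the picture is now bounded by a strictly smaller span region (the pieces of $f_i$ on the $q$-side), and the induction hypothesis applied there delivers a reconfiguration into a separator containing $q$ as well, i.e.\ containing all of $E_i$. The final step is a pure climb: the consecutive cuts $E_{i-1}$ and $E_i$ share their common span vertex and differ in exactly one vertex, and both are $st$-separators, so replacing the non-shared endpoint of $f_i$ by that of $f_{i-1}$ is a single valid token jump carrying a separator containing $E_i$ to one containing $E_{i-1}$ (any extra tokens merely come along). Iterating up to $f_\ell$ produces a separator containing $E_\ell=\{a,z\}$. \textbf{The main obstacle is the single-jump alignment step}: when $A$ does not already meet the span one must rule out that it genuinely encircles a parallel gadget, and it is precisely here that Lemma~\ref{lem:parallel_pathway} is indispensable, both for locating the through-vertex and for guaranteeing that the exchange preserves the separator property.
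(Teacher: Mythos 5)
Your overall skeleton---locate the transition edge of the span by reachability analysis in $G \setminus A$, get a token onto one of its endpoints with a single jump, then climb outward keeping both endpoints of a span edge covered---is the same as the paper's. But the crux step is wrong as stated: it is not true that ``the only way $A$ can avoid meeting $f_i$ is by wrapping around a parallel substructure,'' and Lemma~\ref{lem:parallel_pathway} does not govern the generic case. The paper distinguishes three blocking configurations. First, $A$ may contain a piece $w$ of an $xt$-edge where $\supp(t) = xy$; then \emph{both} endpoints of $f_p = \supp(t)$ can be unreachable from $t$ (e.g.\ $A$ consists of one piece on each child edge of $\supp(t)$, so that $C_t = \{t\}$), in which case your innermost edge with $E_i \not\subseteq C_t$ is $f_p$ itself and there is \emph{no} shared vertex $q \in C_t$; the phrase ``with $t$ itself playing this role when $i=p$'' papers over a case your framework cannot process, and no parallel pair with $t$ exists here since such a $w$ has its support related to $\supp(t)$. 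The paper handles this by a separate preliminary move of $w$ to $x$. Second, some $w \in A$ may form a parallel pair with $t$ with respect to $x,y$---only here does Lemma~\ref{lem:parallel_pathway} apply, and note that the paper jumps $w$ to the \emph{reachable} endpoint $x$, not to the unreachable one as your sketch suggests (though by the lemma's conclusion either would separate). Third, and generically, when no token forms a parallel pair with $t$, the transition edge $f_i = xy$ ($x$ unreachable, $y$ reachable) was subdivided at some vertex $w$ with $f_{i+1} = wy$, and the blocking token is a piece of the sibling edge $xw$; the lowest common ancestor of its support and $\supp(t)$ is the $S$-node subdividing $f_i$, so this token and $t$ are emphatically \emph{not} a parallel pair and Lemma~\ref{lem:parallel_pathway} is inapplicable. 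The paper instead argues directly that moving this piece to $x$ creates no $s$--$t$ path, using the case hypothesis to rule out parallel $xy$-edges. Your proposal collapses the first and third configurations into the second, which is a genuine gap, not a presentational one.

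A smaller point: the ``upgrade one endpoint to both endpoints'' step does not need your inner induction, whose hypothesis (a minimal separator contained in the pieces of $f_i$) need not even hold for the reconfigured set. Once one endpoint of $f_i$ carries a token, \emph{any} other token can jump onto the second endpoint, because the resulting set contains $E_i$, which is by itself an $st$-separator by Observation~\ref{obs:edge_sep}; the outward climb you then describe matches the paper's.
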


\begin{proof}
    Let us first show that if $A$ contains one endpoint of edge $f_i \in F$, then we can reconfigure $A$ into a separator that contains $a,z$.
    If this is the case, we can move any other token to the other endpoint of $f_i$; by Observation~\ref{obs:edge_sep}, the new set is still an $st$-separator.
    At this point, since $f_i \cap f_{i-1} = x_i$ for all $i$, we can iteratively move the token in $f_{i-1} \setminus \{x_i\}$ to $f_i \setminus \{x_i\}$ until we obtain $f_\ell = az$.
    For the remainder of this proof, we assume that $A$ does not intersect any of the endpoints of the edges in $F$.
    
    If $\supp(t) = \{x,y\}$ and there is some $w \in A$ that is a piece of an $xt$-edge, simply move the token on $w$ to $x$: the resulting set is still a separator since every $s-t$ path that contains $w$ necessarily contains $x$.
    We may now assume that $A$ contains no piece of the support of $t$.
    
    Suppose now that there is some $w \in A$ such that $t,w$ are parallel with respect to $x,y$ and let $f_i \in F$ be an $xy$-edge.
    By Lemma~\ref{lem:parallel_pathway}, which has its conditions satisfied, every $s-t$ path in $G \setminus (A \setminus \{w\})$ is of the form $t-x-w-y-s$ and we conclude that $A \cup \{x\} \setminus \{w\}$ is an $st$-separator.
    
    If there is no vertex of $A$ forming a parallel pair with $t$, let $f_i = xy$ be the edge of largest index in $F$ such that $x$ is unreachable from $t$ but $y$ is; note that there must be one such edge since $\supp(t)$ is reachable but neither $a$ nor $z$ are and $f_i \cap f_{i+1} \neq \emptyset$.
    By this argument, it must be that $f_{i+1} = wy$, that $w$ is reachable from $t$ on $G \setminus A$ and that edge $xw$ has at least one piece, otherwise $x$ would be reachable.
    Furthermore, there is no $xy \in E(G)$ nor any other $xy$-edge in $(G_1, \dots, G_k)$: if there was such an edge it would have to be the target of an $S$ operation, but in this case pieces of this edge would have been required to block any $y-x$ path in $G \setminus A$, thus there would be some piece forming a parallel pair with $t$.
    As such, for $x$ to be unreachable, there must be at least one piece of $xw$ in $A$.
    Moving this piece to $x$ does not create any $s-t$ path and the new separator now contains a vertex of $f_i$.
\end{proof}

\begin{lemma}
    \label{lem:parallel_pair}
    If $s,t$ form a parallel pair on $G$, then, under TJ, any minimal $st$-separator $A$ can be reconfigured into a separator that contains $M(s,t)$.
\end{lemma}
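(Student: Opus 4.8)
The plan is to combine the structure of the PS-tree $\mathcal{T}$ with Lemmas~\ref{lem:span_separator} and~\ref{lem:parallel_pathway}. Since $s,t$ are parallel with respect to $a,b$, Lemma~\ref{lem:canon_sizes} gives $M(s,t) = \{a,b\}$, every $s$-$t$ path passes through $a$ or $b$, and the descendants of $e_i = ab$ form a subgraph meeting the rest of $G$ only at $a,b$ and decomposing into parallel branches, with $s$ and $t$ in distinct branches. The first fact I would record is that the topmost edge of $\espan(t)$ has endpoints exactly $\{a,b\}$: descending $t$'s lineage from the node $(P,ab)$, every operation before the first subdivision is a parallel operation on an $ab$-edge, so that first subdivided edge is itself an $ab$-edge of which $t$ is a piece; the symmetric statement holds for $s$. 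Consequently, whenever all tokens of a separator lie among the pieces of such an $ab$-edge that $t$ (or $s$) is a piece of, Lemma~\ref{lem:span_separator} reconfigures it, under TJ, into a separator containing $\{a,b\} = M(s,t)$, finishing that case.

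I would then organize the argument around the reachability of $a$ and $b$ from $t$ in $G \setminus A$, and handle first the two cases where neither or both poles are reachable. If neither $a$ nor $b$ is reachable from $t$, then $A$ separates $t$ from $\{a,b\}$; since $t$ reaches the rest of $G$ only through the poles, minimality of $A$ forces every token into the pieces of the $ab$-edge $f_1 \in \espan(t)$, and Lemma~\ref{lem:span_separator} applies. If both poles are reachable from $t$, then $s$ lies in another component and reaches neither pole, so $A$ separates $s$ from $\{a,b\}$; running the same minimality argument with the roles of $s$ and $t$ exchanged confines $A$ to the pieces of an $ab$-edge in $\espan(s)$, and Lemma~\ref{lem:span_separator} again delivers a separator containing $\{a,b\}$.

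The main obstacle is the remaining case, in which exactly one pole --- say $a$ --- is reachable from $t$ while $b$ is not. Now $A$ must simultaneously block $t$ from $b$ and $s$ from $a$ and may carry extra tokens scattered over unrelated auxiliary branches, so no single application of Lemma~\ref{lem:span_separator} suffices. My plan is to peel these tokens off one at a time with Lemma~\ref{lem:parallel_pathway}: whenever some $w \in A$ forms a parallel pair with $t$ with respect to a split $x,y$ whose endpoints are unoccupied, the lemma certifies that exactly one of $x,y$ is reachable from $t$ and that jumping $w$ onto that reachable endpoint keeps the set an $st$-separator, moving the token strictly outward toward a pole. I would first drain all such parallel tokens, then slide the residual tokens blocking $t$ from $b$ along the suffix $F$ of $\espan(t)$ exactly as in Lemma~\ref{lem:span_separator} (justified by Observation~\ref{obs:edge_sep}) until $b$ enters the separator; with $b$ fixed the instance collapses into one of the two easy cases and $a$ can be brought in the same way. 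The delicate point I expect to spend the most effort on is the bookkeeping that makes this terminate and stay valid: I must exhibit a monotone progress measure --- the total depth of the tokens in $\mathcal{T}$, equivalently the steadily growing reachable region of $t$ --- and check that each jump remains a legal TJ move onto a separator, so that the process provably reaches a separator containing $\{a,b\}$.
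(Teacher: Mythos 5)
Your two extreme cases are sound and essentially coincide with the paper's: the paper partitions $A$ into the set $A_s$ of pieces of the $ab$-edge $e_s$ carrying $s$, the set $A_t$ of pieces of the $ab$-edge $e_t$ carrying $t$, and the remainder $A^*$, and when $A = A_t$ or $A = A_s$ it applies Lemma~\ref{lem:span_separator} exactly as you do (your reachability formulation of these two cases is equivalent to the paper's via the minimality argument you give, and your observation that the span suffix begins at an $ab$-edge is the same one the paper uses). The genuine gap is in your mixed case, on three counts. First, the draining step misapplies Lemma~\ref{lem:parallel_pathway}: the tokens of $A_s$ form a parallel pair with $t$ precisely with respect to $a,b$, but for them the lemma's hypothesis that $s$ not be a piece of the same $xy$-edge as $w$ fails ($s$ \emph{is} a piece of $e_s$), so the lemma says nothing about exactly the tokens you must eventually move; moreover the side condition $x,y \notin A$ is not maintained along your process. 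Second, ``slide the residual tokens along $F$ exactly as in Lemma~\ref{lem:span_separator}'' is not licensed: that lemma requires the \emph{entire} minimal separator to be contained in the pieces of a single edge, which is precisely what fails in the mixed case, so you would have to re-derive its internal steps for non-minimal intermediate sets. Third, you explicitly defer the progress measure and the verification that each jump lands on a separator, so the plan never becomes a proof.

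The irony is that the mixed case is the easiest one, and your iterative machinery is unnecessary. By minimality, if exactly one pole is reachable from $t$ in $G \setminus A$, then $s$ reaches exactly the other pole, and both $A_s$ and $A_t$ are nonempty: something among the pieces of $e_t$ must cut $t$ from one pole, and something among the pieces of $e_s$ must cut $s$ from the other (a token confined to one branch cannot block paths in the other, since every path leaving a branch passes through $a$ or $b$). The paper then finishes in \emph{two jumps}: move any token of $A_s$ onto the pole unreachable from $s$ --- this remains a separator because the newly occupied pole blocks the only entrance through which freeing a piece of $e_s$ could create a new path for $t$ --- and then move any other token onto the remaining pole, yielding a separator containing $M(s,t) = \{a,b\}$. (The paper handles the degenerate subcase $A^* = \emptyset$ with $A_s, A_t \neq \emptyset$, which forces $a,b \in V(G_1)$, by the same two jumps.) No draining, no span-sliding, and no termination argument are needed, so you should replace your third paragraph with this direct exchange.
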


\begin{proof}
    Let $s,t$ be a parallel pair with respect to $a,b$ such that $s$ is a piece of $e_s = ab$ and $t$ a piece of $e_t = ab$.
    At first, $A$ can intersect pieces of any edge of the graph, but we can partition $A$ in $\{A_s, A_t, A^*\}$, where $A_s$ are the pieces of $e_s$ in $A$, $A_t$ the pieces of $e_t$, and $A^* = A \setminus A_s \setminus A_t$.
        
    If $A = A_t$ or $A = A_s$, i.e. $A$ is a subset of the pieces of $e_t$ or $e_s$, we can apply Lemma~\ref{lem:span_separator} and, noting that, in both cases the suffix of both spans begin at an $ab$-edge, we are done.
    On the other hand, if $A^* \neq \emptyset$, then there must be, say, $s-a$ paths on $G \setminus A$, but no $s-b$ path, and $t-b$ paths but no $t-a$ path on $G \setminus A$, implying that $A_s, A_t \neq \emptyset$.
    In this case, we can move any token from $A_s$ to $b$ without creating any $s-t$ path, then move any other token to $a$ to obtain $M(s,t)$.
    The case $A^* = \emptyset$ but $A_s, A_t \neq \emptyset$ is only possible if the only $a-b$ paths on $G$ pass through $e_s$ or $e_t$; i.e., $a,b \in V(G_1)$.
    The argument, however, is the same as in the previous case: $A$ destroys every $s-b$ and $t-a$ paths, but since the only $s-t$ paths are precisely $s-a-t$ and $s-b-t$, moving a token in $A_s$ to $b$ and then another to $a$ suffices to construct $M(s,t)$.
\end{proof}

\begin{lemma}
    \label{lem:serial_pair}
    If $s,t$ form a serial pair on $G$, then, under TJ, any minimal $st$-separator $A$ can be reconfigured into a separator that contains $M(s,t)$.
\end{lemma}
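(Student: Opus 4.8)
The plan is to reduce the statement to Lemma~\ref{lem:span_separator} by a short reachability argument. Recall from the proof of Lemma~\ref{lem:canon_sizes} that for a serial pair the lowest common ancestor of $\supp(s)$ and $\supp(t)$ is a node $(S,e_i)$ with $e_i=ab$, subdivided into its children $e_\ell=az$ (on $t$'s side) and $e_r=zb$ (on $s$'s side), and that $M(s,t)=\{a,z\}$. Let $L$ and $R$ denote the sets of pieces of $e_\ell$ and $e_r$, respectively; then $t\in L$, $s\in R$, no piece of $e_\ell$ is adjacent to a piece of $e_r$, and, since deleting the endpoints of a span edge isolates its pieces from the rest of $G$, the only vertices joining $L$ to $V(G)\setminus L$ are $a,z$, while the only ones joining $R$ to the rest are $z,b$. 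The key remark driving everything is that, as $\{a,z\}$ is itself an $st$-separator, every superset of $\{a,z\}$ missing $s,t$ is again an $st$-separator; hence once a token occupies one of $a,z$, a single jump of any other token onto the missing vertex lands on a separator containing $M(s,t)$. It therefore suffices to reconfigure $A$ into any separator meeting $\{a,z\}$.

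First I would clear the easy cases: if $a\in A$ or $z\in A$ we are done by the remark above. So assume $a,z\notin A$, let $T$ be the component of $t$ in $G-A$, and recall that minimality of $A$ gives $A=N(T)$ (and likewise $A$ is the neighborhood of the component of $s$). I then branch on whether $z\in T$. If $z\notin T$, then $t$ is cut from $z$ inside $L$, so $A\cap L\neq\emptyset$; if in addition $a\notin T$, then $A\cap L$ already traps $t$ inside $L$, whence $T\subseteq L$ and $A=N(T)\subseteq L$, so Lemma~\ref{lem:span_separator} applied to $e_\ell$ finishes. Symmetrically, if $z\in T$ and $b\in T$, then the component of $s$ is trapped in $R$ and $A\subseteq R$, so Lemma~\ref{lem:span_separator} applied to $e_r$ (with $s$ playing the role of the piece vertex) yields a separator containing $z$, after which one jump reaches a separator containing $\{a,z\}$.

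The remaining mixed case---$z\notin T$ with $a\in T$, or $z\in T$ with $b\notin T$---is the crux, and verifying it is the step I expect to be the main obstacle. Here I claim a single safe jump places a token on $z$, reducing to the easy case: in the first situation I pick any frontier vertex $w\in A\cap L$ and jump it to $z$, and in the second any frontier vertex $w\in A\cap R$ and jump it to $z$ (such $w$ exists since, respectively, $t$ is cut from $z$ inside $L$, or $z\in T$ forces a cut inside $R$). The difficulty is checking that the new set is still a separator, and this is exactly where the region structure is used: after $z$ is occupied, region $L$ can be left only through $a$ and region $R$ only through $b$, so to reach $s$ from $t$ one must enter $R$ through $b$ along the detour $a$--$O$--$b$, where $O=V(G)\setminus(L\cup R\cup\{a,z,b\})$. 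In the first situation this detour is cut by tokens lying \emph{outside} $L$, which survive the removal of $w\in L$; in the second, $b\notin T$ means the detour is already cut \emph{before} it enters $R$, so deleting $w\in R$ is irrelevant to it. Either way no new $t$--$s$ path appears, $z$ now belongs to the separator, and a final jump onto $a$ completes the reconfiguration. As every branch costs at most one application of Lemma~\ref{lem:span_separator} plus a constant number of jumps, the resulting reconfiguration sequence has polynomial length.
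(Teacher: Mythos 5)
Your proof is correct and follows essentially the same route as the paper's: both reduce to Lemma~\ref{lem:span_separator} when $A$ is trapped among the pieces of $az$ (or, symmetrically, of $zb$ from $s$'s side, followed by the jump $b \to a$), and both handle the remaining cases with a single safe jump onto an endpoint of $M(s,t)$, your case split on membership of $z$, $a$, $b$ in the component $T$ of $t$ being just a reorganization of the paper's case split on reachability of $a$, $z$, and $b$ from $t$. If anything, your bookkeeping via $A = N(T)$ and the region decomposition $L$, $R$, $O$ makes the safety of the jumps onto $z$ more explicit than the paper's corresponding arguments.
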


\begin{proof}
    Let $\{a,z\} = M(s,t)$ such that $e_i = ab$, $z \in V(G)$, $\supp(z) = e_i$, $\supp(t)$ is either $az$ or a descendant of $az$, $\supp(s)$ is either $zb$ or a descendant of it, and $F = \angled{f_\ell, \dots, f_p}$ be the suffix of $\espan(t)$ beginning at $f_\ell = az$; moreover, $A$ does not intersect any $f_i$ of the suffix, otherwise we could reconfigure it to $\{a, z\}$ as done in Lemma~\ref{lem:span_separator}, and $A$ does not contain a piece of an edge that is a descendant of $\supp(t)$, otherwise we could move this piece to the suitable endpoint of $\supp(t)$.

    Suppose that neither $a$ nor $z$ are reachable from $t$ on $G \setminus A$, implying that $A$ is a subset of the pieces of $f_\ell$.
    Note that, because of our last observation on the previous paragraph, neither $a$ nor $z$ are endpoints of $\supp(t)$.
    By Lemma~\ref{lem:span_separator}, we can reconfigure $A$ to contain $\{a,z\}$.
    
    If exactly one of $\{a,z\}$ is reachable from $t$, there is at least one piece of $f_\ell = az$ in $A$ that destroys the paths to one of the endpoints of $f_\ell$, \textit{but not to the other}. Thus, we can freely move any piece of $az$ to the unreachable endpoint of $f_\ell$ without creating any $s-t$ paths.
    
    The only case left, is if both $a$ and $z$ are reachable from $t$.
    If $b$ is not reachable, then $A$ contains a piece $w$ of edges other than $e_i$; thus, if we pick any of these pieces and move to $a$, vertex $b$ remains unreachable and so does $s$: any $z-b$ path that does not use pieces of $az$ is internally vertex disjoint with any $a-b$ path that does not uses pieces of $az$, and moving a piece of such an $a-b$ path does not create a $z-b$ path on $G \setminus (A \cup \{a\} \setminus \{w\})$.
    If, on the other hand, $b$ is reachable, then $A$ is a subset of the pieces of $zb$, and, moreover, neither $b$ nor $z$ are reachable from $s$ on $G \setminus A$.
    By the second paragraph, we can reconfigure $A$ to contain $\{b,z\}$, but in this case the reconfigured separator would intersect $F$, and we can move the token on $b$ to $a$ and obtain $M(s,t)$.
\end{proof}

\begin{lemma}
    \label{lem:sequential_pair}
    If $s,t$ form a sequential pair on $G$, then, under TJ, any minimal $st$-separator $A$ can be reconfigured into a separator that contains $M(s,t)$.
\end{lemma}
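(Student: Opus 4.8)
The plan is to exploit the fact that, for a sequential pair, the canonical separator $M(s,t) = \{a\} \cup V(s,t)$ is a union of \emph{single-vertex} bottlenecks; this makes reconfiguration much more direct than in Lemmas~\ref{lem:serial_pair} and~\ref{lem:parallel_pair}, where the canonical sets $\{a,z\}$ and $\{a,b\}$ consist of two vertices and thereby force the reachability case analysis. Adopting the notation of Lemma~\ref{lem:canon_sizes}, write $\supp(t) = sa$ and $V(s,t) = \{z_1, \dots, z_m\}$ with $m = \varepsilon(s,t)+1$, where the edge $st$ produced by subdividing $sa$, together with its $\varepsilon(s,t)$ parallel copies, is subdivided to create the $z_j$. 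For each $j$ let $B_j$ be the $2$-terminal series-parallel subgraph spanned by the pieces of the $j$-th such $st$-edge; since each $B_j$ meets the rest of $G$ only at $s$ and $t$ and $z_j$ is its subdivision vertex, every $s$--$t$ path inside $B_j$ passes through $z_j$, so $z_j$ is a cut vertex of $B_j$. Setting $R = \bigcup_{j}\left(V(B_j) \setminus \{s,t\}\right)$ and $G' = G - R$, I would then argue that $a$ is a cut vertex of $G'$ separating $s$ from $t$: the internal vertices of the $2$-terminal subgraph grown from the $ta$-edge meet the rest of $G$ only at $t$ and $a$, so in $G'$ deleting $a$ isolates the side containing $t$.

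With this decomposition in hand, the reconfiguration is immediate. Partition $A$ as $A = A_1 \sqcup \dots \sqcup A_m \sqcup A_a$, where $A_j = A \cap (V(B_j)\setminus\{s,t\})$ and $A_a = A \setminus R$. Because $A$ is an $st$-separator and the $B_j$ and $G'$ are attached only at $s,t$, each restriction $A_j$ separates $s$ from $t$ inside $B_j$ and $A_a$ separates them inside $G'$; in particular every part is nonempty. I would then, for each $j$ with $z_j \notin A$, move one token of $A_j$ onto $z_j$ in a single jump: the removed token was internal to $B_j$, so its deletion can only open $s$--$t$ paths through $B_j$, all of which now pass through the occupied vertex $z_j$, while the other branches and $G'$ are untouched; hence the result is still a separator. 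Performing these jumps one branch at a time and finally moving a token of $A_a$ onto $a$ (whose occupancy blocks every $s$--$t$ path of $G'$, since $a$ is a cut vertex there) yields, after at most $m+1$ jumps, a separator that contains $M(s,t)$, with every intermediate configuration valid.

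I expect the main obstacle to be the structural bookkeeping rather than the reconfiguration step: one must justify carefully, from the recursive series-parallel construction and the definitions of $\supp$ and span, that the $B_j$ are pairwise disjoint apart from $\{s,t\}$, that they cover exactly the pieces of the parallel $st$-edges, and --- the genuinely delicate point --- that $a$ is a cut vertex of $G'$ no matter how complicated the part of $G$ joining $a$ back to $s$ is. I would also treat the boundary cases separately, in parallel with Lemma~\ref{lem:canon_sizes}: when $t = v$ and $s = u$ there is no outer bottleneck, so $M(s,t) = V(s,t)$ and only the $z_j$-jumps are needed, while the case $s = u$ is handled by the same cut-vertex argument for $a$. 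Once the decomposition is established, the token movements are single, always-legal jumps, so no reachability analysis analogous to that of the serial and parallel cases is required.
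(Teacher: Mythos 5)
Your proposal is correct and follows essentially the same route as the paper's proof: observe that each subdivided $st$-edge contributes a branch attached only at $s,t$ whose every internal $s$--$t$ path passes through $z_j$, jump one token per branch onto $z_j$, and finally jump a leftover token onto $a$, so that the result contains $M(s,t)$. Your explicit $B_j$/$G'$ decomposition and cut-vertex claim merely formalize the paper's observations that the $s$--$t$ paths through pieces of $e_j$ are internally vertex-disjoint and that no piece of $sz_j$ is adjacent to a piece of $z_jt$ (the paper's preliminary relocation of $ta$-pieces onto $a$ is inessential and you rightly omit it).
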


\begin{proof}
    Let $e_i = sa$ be the support of $t$ and $V(s,t)$ be the set of vertices $z_j$ that have an $st$-edge as their support.
    Let $A_s$ be the set of pieces of $st$-edges in $A$, and $A_a$ the set of pieces of $ta$-edges in $A$.
    Since $s$ cannot be adjacent to any piece $w$ of $ta$, any $s-w$ path between either contains $t$ or $a$; since no $s-t$ paths exist in $G \setminus A$, if $w \in A_a$, we can move the token on $w$ to $a$ without creating any $s-t$ path.
    Now, for each edge $e_j = st$, since the $s-t$ paths that contain a piece of $e_j$ are internally vertex disjoint, there must be at least one $w_j \in A_s$, but we can move the token on $w_j$ to $z_j$ without creating any $s-t$ path since no piece of $sz_j$ can be adjacent to a piece of $z_jt$.
    After placing tokens on all $z_j$'s, any other token can be moved to $a$, thus obtaining $M(s,t)$.
\end{proof}

\begin{lemma}
    \label{lem:nested_pair}
    If $s,t$ form a nested pair on $G$, then, under TJ, any minimal $st$-separator $A$ can be reconfigured into a separator that contains $M(s,t)$.
\end{lemma}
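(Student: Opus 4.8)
Following the notation of Lemma~\ref{lem:canon_sizes}, write $M(s,t) = \{a,z\}$, where $(S,e_i)$ with $e_i = as$ is the node whose subdivision created $z$, with children $e_\ell = az$ and $e_r = zs$, and $\supp(t)$ equals $az$ or a descendant of it, so that $t$ is a piece of $az$ while $s$ is the far endpoint of the $zs$ side. The plan is to mirror the serial-pair argument of Lemma~\ref{lem:serial_pair}, exploiting the simplification that, since $A$ already separates $s$ from $t$, the vertex $s$ is \emph{never} reachable from $t$ in $G\setminus A$. As in that proof, I would first make two reductions without loss of generality: if $A$ meets an endpoint of some edge in the suffix $F$ of $\espan(t)$ beginning at $az$, then Lemma~\ref{lem:span_separator} already reconfigures $A$ to contain $\{a,z\}$, so assume it does not; and if $A$ contains a piece of a descendant of $\supp(t)$, I slide it to the appropriate endpoint of $\supp(t)$, so assume no such piece remains.

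The crux is a single structural observation: the pieces of $az$ form a series-parallel gadget joined to the rest of $G$ only at $a$ and $z$, so $t$ can leave its home region only through $a$ or through $z$; beyond $a$ lies an outer $a$-$s$ route, present because $a$ and $s$ are the endpoints of the subdivided edge $e_i$ and $G$ is $2$-connected, beyond $z$ lies the route to $s$ through the pieces of $zs$, and these two routes meet only at $s$. Writing $C_t$ for the component of $t$ in $G\setminus A$, I would then place a token on each of $a$ and $z$ separately. Fix $p\in\{a,z\}$. If $p$ is unreachable from $t$, then, since $t$'s only exits are $a$ and $z$, some piece of $az$ in $A$ blocks every $t$-$p$ subpath, and I slide it onto $p$; the set stays a separator because $p$ now blocks that exit. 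If instead $p$ is reachable, then $p\in C_t$ while $s\notin C_t$ and $ps\notin E(G)$ (otherwise $s$ would be reachable from $t$), so the exterior $p$-$s$ route is cut by a token of $A$ lying strictly beyond $p$ --- an outer blocker when $p=a$, a piece of $zs$ when $p=z$ --- which I slide onto $p$. When both endpoints are unreachable, minimality forces $N(C_t)=A$ into the pieces of $az$ and Lemma~\ref{lem:span_separator} delivers $\{a,z\}$ in one stroke (or a single slide finishes if $a$ or $z$ already lies in $A$).

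Each individual slide is safe: putting a token on a gateway $p$ severs the corresponding route, and the blocker I remove was only guarding that same route, so no new $s$-$t$ path appears --- the ``unique pathway'' phenomenon already isolated in Lemma~\ref{lem:parallel_pathway}. The hard part will be the interaction between the two placements: when I put a token on $a$ and then on $z$, I must be sure the slide producing the token on $a$ does not destroy the blocker witnessing $z$'s route, and conversely, which is exactly where the internal disjointness of the outer and $zs$ routes, together with the minimality of $A$, guarantees that the two blockers are distinct and can be removed independently. A smaller point to nail down is the existence of each blocker in the reachable case: from $ps\notin E(G)$ and the $2$-connectivity of $G$ one argues that the $p$-$s$ route genuinely exists and is therefore cut by $A$, so that a token to slide is indeed available; for $p=z$ this also rules out $zs$ being a bare edge, which would otherwise make $s$ reachable from $t$.
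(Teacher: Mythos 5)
Your proposal is correct and follows essentially the same route as the paper: the same two preliminary reductions via Lemma~\ref{lem:span_separator}, the same case analysis on which of $a,z$ is reachable from $t$ in $G \setminus A$, and the same blocking tokens (a piece of $zs$ when $z$ is reachable, a piece of $az$ when $z$ is not, and Lemma~\ref{lem:span_separator} when neither endpoint is reachable). The only divergence is that the ``interaction between the two placements'' you flag as the hard part dissolves in the paper's proof: it places a token on $z$ first and then moves \emph{any} other token to $a$, since the resulting set contains $\{a,z\}$, which is itself an $st$-separator by Observation~\ref{obs:edge_sep}, so no second blocker and no disjointness argument are needed.
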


\begin{proof}
    Let $\{a,z\} = M(s,t)$ such that $e_i = as$ and $z$ are as in the proof of Lemma~\ref{lem:canon_sizes}, and $F = \angled{f_\ell, \dots, f_p}$ be the suffix of $\espan(t)$ beginning at $f_\ell = az$; moreover, $A$ does not intersect any $f_i$ of the suffix, otherwise we could reconfigure it to $\{a, z\}$ as done in Lemma~\ref{lem:span_separator}, and $A$ does not contain a piece of an edge that is a descendant of $\supp(t)$, otherwise we could move this piece to the suitable endpoint of $\supp(t)$.
    
    As was done in Lemma~\ref{lem:serial_pair}, if neither $a$ nor $z$ are reachable from $t$ on $G \setminus A$, $A$ is a subset of the pieces of $f_\ell$ and neither $a$ nor $z$ are endpoints of $\supp(t)$.
    By Lemma~\ref{lem:span_separator}, we can reconfigure $A$ to contain $\{a,z\}$.
    
    If $z$ is reachable from $t$ on $G \setminus A$, there necessarily is some piece $w$ of $zs$ in $A$, otherwise we would have a $t-z-s$ path; in this case, we can move the token on $w$ to $z$ and then any other token to $a$.
    Otherwise, if only $a$ is reachable from $t$, there is at least one piece of $f_\ell = az$ in $A$ that destroys the paths to $z$, \textit{but not to $a$}.
    Thus, we can move the token on any piece of $az$ in $A$ to $z$ without creating any $s-t$ paths and, afterwards, move another token to $a$ and obtain $M(s,t)$.
\end{proof}

\begin{lemma}
    \label{lem:root_reconf}
    Let $s,t \in V(G)$ such that $s \in V(G_1)$. Under TJ, any minimal $st$-separator $A$ can be reconfigured into a separator that includes $M(s,t)$.
\end{lemma}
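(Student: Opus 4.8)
The plan is to follow the same three-way split used in the final paragraph of the proof of Lemma~\ref{lem:canon_sizes}, since $M(s,t)$ was defined there exactly by distinguishing the situations that arise when $s = u \in V(G_1)$: (1) $t = v$; (2) there is some edge $e_i = st$ in the construction; and (3) there is no such edge. My claim is that each of these configurations is structurally a boundary analogue of a case already settled, so that the corresponding reconfiguration argument can be imported almost verbatim, the only genuinely new ingredient being the role played by the terminal $s \in V(G_1)$, which (unlike in the previous lemmas) is not a piece of any edge.

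First I would treat cases (1) and (2) by reusing the argument of Lemma~\ref{lem:sequential_pair}. In both, the $s$--$t$ connections decompose into internally vertex-disjoint parallel classes, one per $st$-edge, and each such edge contributes its midpoint $z_j \in V(s,t)$ to $M(s,t)$; in case (2) the support $\supp(t) = sa$ additionally contributes the vertex $a$. Exactly as before, for every $st$-edge $e_j$ the internal disjointness of the paths through its pieces forces $A$ to contain at least one piece $w_j$, which we jump onto $z_j$ without creating an $s$--$t$ path, and in case (2) any piece of a $ta$-edge is jumped onto $a$. The one difference from Lemma~\ref{lem:sequential_pair} is that case (1) has no outer vertex $a$: here $M(s,t) = V(s,t)$, and since a minimal separator must spend exactly one token on each parallel class, $|A| = |V(s,t)|$, so once every $z_j$ is occupied we have already reached a separator containing $M(s,t)$ with no tokens left over.

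Next I would dispatch case (3) by reusing Lemma~\ref{lem:nested_pair}. Writing $M(s,t) = \{z,a\}$ with $\supp(z)$ an $sa$-edge and $\supp(t)$ equal to $za$ or a descendant of it, the pair $(s,t)$ sits in exactly the nested configuration of that lemma, with $s$ now a terminal of $V(G_1)$ in place of an arbitrary endpoint of the support. The argument then runs unchanged: if $A$ is contained in the pieces of $f_\ell = za$ we invoke Lemma~\ref{lem:span_separator}; otherwise we inspect which of $a,z$ is reachable from $t$ in $G \setminus A$, jump a piece of $A$ onto the unreachable endpoint, and then place a further token on the remaining endpoint, obtaining a separator that contains $\{a,z\} = M(s,t)$.

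The step I expect to require the most care is confirming that the boundary status of $V(G_1)$ does not invalidate the imported reachability and internal-disjointness arguments. Since $s$ (and, in case (1), also $t$) is not a piece of any edge, the earlier proofs' reasoning about the spans and supports of the terminals does not apply to it directly; what must be verified is that $V(G_1)$ merely supplies the outermost parallel class of $s$--$t$ paths, so that every $s$--$t$ path still routes through the $z_j$'s in cases (1)--(2) or through $\{a,z\}$ in case (3) in the same manner as before. Once this is checked, together with the observation that in these configurations minimal $st$-separators are in fact minimum---each parallel class of internally disjoint paths admitting exactly one cut, by the treewidth-$2$ structure---the token counts agree with $|M(s,t)|$ and the previous arguments transfer with no essential modification.
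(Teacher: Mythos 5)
Your proposal takes essentially the same route as the paper: the identical three-case split inherited from the last paragraph of the proof of Lemma~\ref{lem:canon_sizes} ($t \in V(G_1)$; some $e_i = st$ exists; no $st$-edge ever exists), with the first two cases resolved by importing the argument of Lemma~\ref{lem:sequential_pair} and the third by importing Lemma~\ref{lem:nested_pair} (via Lemma~\ref{lem:span_separator} when $A$ sits inside the pieces of $f_\ell = az$), so the reconfiguration plan itself is correct. One caveat: your auxiliary claim that minimal $st$-separators are minimum here (``exactly one token per parallel class,'' hence $|A| = |V(s,t)|$ in case (1)) is false. Inside a single parallel class, the pieces of one $st$-edge form a series-parallel subgraph whose own minimal blockers may have more than one vertex: subdivide an $st$-edge at $z_1$, apply a parallel operation to $sz_1$ and subdivide both copies to get pieces $x_1, x_2$; then $\{x_1, x_2, z_2\}$ (with $z_2$ the midpoint of the other class) is a minimal but not minimum $st$-separator, spending two tokens on one class. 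Fortunately nothing in your argument needs this claim: the lemma only asks for a separator \emph{containing} $M(s,t)$, so tokens left over after every $z_j$ (and, in cases (2)--(3), $a$) is occupied are simply harmless, exactly as in the paper's proof.
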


\begin{proof}
    We begin by supposing that $t \in V(G_1)$; since $M(s,t) = V(s,t)$ in this case, an argument similar to the one used in the proof Lemma~\ref{lem:sequential_pair} suffices: for each edge $e_j$ corresponding to vertex $z_j \in V(s,t)$, at least one of its pieces must be in $A$, and all we must do is move it to $z_j$.
    Otherwise, $v \in V(G_1) \setminus \{s\}$.
    If there is some $e_i = st$ and $a \in \supp(t) \setminus \{s\}$, we also proceed as in the proof of Lemma~\ref{lem:sequential_pair}, but now, since $a \in M(s,t)$, the last operation applied is to move a token not on a vertex of $V(s,t)$ to $a$.
    Finally, if there is no $e_i = st$, we proceed as in Lemma~\ref{lem:nested_pair}, where edge $as$ is the support of the first vertex $z \in V(G)$ such that $t$ is a piece of $az$, so that $M(s,t) \{a, z\}$.
\end{proof}

\begin{theorem}
    \label{thm:canon_reconf}
    Under TJ, any two $st$-separators $A$ and $B$ of $G$ can be reconfigured into each other.
\end{theorem}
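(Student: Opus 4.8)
The plan is to route every $st$-separator through the canonical separator $M(s,t)$. Since TJ preserves the number of tokens, I take as a standing (and necessary) hypothesis that $|A| = |B|$; the goal is then to show that all $st$-separators of a common size lie in one connected component of the reconfiguration graph. The backbone of the argument is that $M(s,t)$ is itself an $st$-separator, so \emph{every} superset of $M(s,t)$ is again an $st$-separator. Consequently, once a separator contains $M(s,t)$, its remaining tokens become entirely free and may be relocated one at a time without ever reconnecting $s$ and $t$. It therefore suffices to reconfigure both $A$ and $B$ into separators containing $M(s,t)$, and then to connect those two.

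First I would reduce to minimal separators. Given $A$, pick a minimal $st$-separator $A' \subseteq A$. The classification underlying Lemma~\ref{lem:canon_sizes} is exhaustive: either $s$ or $t$ lies in $V(G_1)$, handled by Lemma~\ref{lem:root_reconf}, or $s,t$ form a parallel, serial, sequential, or nested pair, handled respectively by Lemmas~\ref{lem:parallel_pair}, \ref{lem:serial_pair}, \ref{lem:sequential_pair}, and~\ref{lem:nested_pair}. In each case the chosen lemma reconfigures $A'$ under TJ into a separator containing $M(s,t)$. I then lift this reconfiguration from the minimal core $A'$ to the full set $A$: the excess tokens $A \setminus A'$ are carried along, and whenever a move prescribed by the relevant lemma targets a vertex currently occupied by an excess token, that excess token is first jumped to a free vertex (which exists unless $A = V(G)\setminus\{s,t\}$, a degenerate case where $A$ is the unique separator of its size and $A = B$). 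Each such auxiliary jump leaves the minimal core in place and hence preserves the separator property, and each core move is a legal jump on the enlarged set precisely because supersets of separators are separators. This yields $A \leftrightsquigarrow A^\star$ with $M(s,t) \subseteq A^\star$ and $|A^\star| = |A|$; the identical construction gives $B \leftrightsquigarrow B^\star$ with $M(s,t) \subseteq B^\star$ and $|B^\star| = |B| = |A^\star|$.

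Finally I would connect $A^\star$ and $B^\star$. Both contain $M(s,t)$ and have equal cardinality, so $A^\star \setminus M(s,t)$ and $B^\star \setminus M(s,t)$ are equal-size subsets of $V(G)\setminus(\{s,t\}\cup M(s,t))$. Keeping the tokens on $M(s,t)$ fixed, I move the free tokens to their targets greedily, one per step, each time sending a token of $A^\star\setminus B^\star$ to an unoccupied vertex of $B^\star \setminus A^\star$; since $M(s,t)$ alone already separates $s$ from $t$, every intermediate set is an $st$-separator, so $A^\star \leftrightsquigarrow B^\star$. Chaining $A \leftrightsquigarrow A^\star \leftrightsquigarrow B^\star \leftrightsquigarrow B$ and using the symmetry of TJ adjacency to reverse $B \leftrightsquigarrow B^\star$ completes the proof.

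I expect the main obstacle to be the lifting step from minimal to arbitrary separators. The delicate point is that the move sequences supplied by Lemmas~\ref{lem:parallel_pair}--\ref{lem:nested_pair} and~\ref{lem:root_reconf} are tailored to a minimal separator, so one must verify that interleaving them with the jumps that evacuate excess tokens never violates the cardinality constraint, never forces two tokens onto the same vertex, and never momentarily reconnects $s$ and $t$; all three follow from the invariant that the minimal core remains an $st$-separator at every intermediate step. By contrast, the exhaustiveness of the case analysis and the correctness of each individual reconfiguration are already secured by the preceding lemmas, and the concluding transit between $A^\star$ and $B^\star$ is routine once $M(s,t)$ is frozen in place.
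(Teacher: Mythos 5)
Your proposal is correct and follows essentially the same route as the paper's proof: pass to minimal cores $A' \subseteq A$, $B' \subseteq B$, reconfigure each under TJ into a superset of $M(s,t)$ via Lemmas~\ref{lem:parallel_pair} through~\ref{lem:root_reconf}, carry the excess tokens along unchanged, and finish by freely relocating the tokens outside the frozen $M(s,t)$, reversing the $B$-side sequence by TJ symmetry. If anything, you are more careful than the paper, which simply sets $A_i = A_i' \cup (A \setminus A')$ without addressing the possibility that a core move targets a vertex already occupied by an excess token (and you also make explicit the standing hypothesis $|A| = |B|$ forced by TJ); your auxiliary evacuation jump patches exactly that glossed-over detail.
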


\begin{proof}
    Let $A' \subseteq A$ and $B' \subseteq B$ be minimal $st$-separators.
    By Lemmas~\ref{lem:parallel_pair} through~\ref{lem:root_reconf}, since both $A'$ and $B'$ can be reconfigured into a separator that contains $M(s,t)$. Let $\mathcal{A'} = \angled{A_1', \dots, A_p'}$ and $\mathcal{B'} = \angled{B_1', \dots, B_q'}$ be such sequences, respectively, such that $A_1' = A'$, $B_1' = B'$, $M(s,t) \subseteq A_p'$ and $M(s,t) \subseteq B_q'$. 
    Let $\mathcal{A} = \angled{A_1, \dots, A_p}$ and $\mathcal{B} = \angled{B_1, \dots, B_q}$ be such that $A_i = A_i' \cup (A \setminus A')$ and $B_i = B_i'\cup (B\setminus B')$.
    It is possible to reconfigure $A_p$ into $B_q$ by moving the tokens in $A_p \setminus M(s,t)$ to $B_p \setminus M(s,t)$ arbitrarily.
    Since, under TJ, the reconfiguration of separators is symmetric, we reconfigure $A$ into $A_p$ through $\mathcal{A}$, $A_p$ into $B_q$, and $B_q$ into $B$ through the reverse of $\mathcal{B}$.
\end{proof}
\section{Concluding Remarks}

In this paper, we investigated the complexity of the reconfiguration of vertex separators under three commonly studied rules: token addition/removal (TAR), token jumping (TJ), and token sliding (TS).
We showed that TAR and TJ are equivalent in the sense that, for every TJ-instance, there is a corresponding TAR-instance, and for every TAR instance, it is either trivially a negative instance, or we can find an equivalent TJ-instance.
We then proceeded to show that \pname{Vertex Separator Reconfiguration} is \NPc\ under TAR/TJ and \PSPACE-\Complete\ under TJ for a subclass of bipartite graphs, which in turn implies that, for general graphs, the three rules are not equivalent to each other unless $\NP = \PSPACE$.
On the positive side, we showed that, for every tame graph class, i.e. with a polynomially bounded number of minimal separators, under TAR/TJ, the reconfiguration problem can be solved in polynomial time; our final results explored classes where this assumption did not hold, namely $\{3P_1, diamond\}$-free -- for which we provide a novel characterization -- and series-parallel graphs, for which we actually prove that it is \textit{always} possible to reconfigure under TAR/TJ.

In terms of future work under TAR/TJ, a natural investigation into the complexity of the problem for different non-tame graph classes is highly desired, particularly because the two examples we have of polynomially solvable cases are trivial in the sense that the answer is always positive, and computing the reconfiguration sequences is achievable by the algorithms we describe in our proofs.
Perhaps there is no gray area and, for a non-tame class, the reconfiguration of vertex separators is either always possible, or answering this query is at least \NPH.
We are specially interested in settling these questions for the two other classes that \citeauthor{Milanic}~\cite{Milanic} showed to be non-tame: $\{claw, K_4, C_4, diamond\}$-free and $\{K_3, C_4\}$-free graphs.
Aside from these jumping rules, while the most obvious one is the study of \pname{Vertex Separator Reconfiguration} under TS, it also appears to be the most challenging, even on restricted classes.
Regardless, it would be quite interesting to find reasonable sufficient conditions or non trivial examples where the reconfiguration problem becomes no harder than \NPc\ under TS.
A related but more particular specific set of queries we would like to answer are about the parameterized complexity of \pname{Vertex Separator Reconfiguration}; for instance, the tight relationship between treewidth and separators may yield useful tools to generalize our result on series parallel graphs to graphs of bounded treewidth.

\bibliographystyle{plainnat}
\bibliography{main}

\end{document}